\newcommand{\ignore}[1]{}
\newcommand{\goesto}[1]{\xrightarrow{#1}}
\newcommand{\A}{\mathcal A}
\newcommand{\T}{\mathcal T}
\renewcommand{\S}{\mathcal S}
\newcommand{\lts}{\mbox{\sc lts}\xspace}
\newcommand{\stca}{\mbox{\sc tca}\xspace}
\newcommand{\tca}{\mbox{\sc tca}\xspace}
\newcommand{\ca}{\mbox{\sc ca}\xspace}
\newcommand{\rac}{\mbox{\sc rac}\xspace}
\newcommand{\ta}{\mbox{\sc ta}\xspace}
\newcommand{\true}{\mathbf{true}}
\newcommand{\N}{\mathbb N}
\newcommand{\Z}{\mathbb Z}
\newcommand{\Q}{\mathbb Q}
\newcommand{\Qpos}{\Q_{\geq 0}}
\newcommand{\I}{\mathbb I}
\newcommand{\C}{\mathtt C}
\renewcommand{\P}{\mathtt P}
\newcommand{\R}{\mathcal R}
\newcommand{\Reg}{\mathtt R}
\renewcommand{\SS}{\mathtt S}
\newcommand{\SSyes}{\SS_{\textrm{yes}}}
\newcommand{\SSno}{\SS_{\textrm{no}}}
\newcommand{\Xyes}{\X^\p_{\textrm{yes}}}
\newcommand{\Cntyes}{\Cnt_{\textrm{yes}}}
\newcommand{\n}{\mathtt n}
\newcommand{\Cnt}{\mathtt N}
\newcommand{\K}{\mathcal K}
\newcommand{\M}{\mathtt M}
\newcommand{\m}{\mathtt m}
\newcommand{\X}{\mathtt X}
\newcommand{\Y}{\mathtt Y}
\newcommand{\p}{\mathtt p}
\newcommand{\q}{\mathtt q}
\renewcommand{\r}{\mathtt r}
\newcommand{\s}{\mathtt s}
\newcommand{\pq}{\mathtt {pq}}
\newcommand{\qp}{\mathtt {qp}}
\newcommand{\qr}{\mathtt {qr}}
\newcommand{\x}{\mathtt x}
\newcommand{\y}{\mathtt y}
\newcommand{\z}{\mathtt z}
\renewcommand{\c}{\mathtt c}
\renewcommand{\L}{\mathtt L}
\renewcommand{\l}{\mathtt l}
\newcommand{\op}{\mathsf{op}}
\newcommand{\Op}{\mathsf{Op}}
\newcommand{\nop}{\mathsf{nop}}
\newcommand{\elapse}{\mathsf{elapse}}
\newcommand{\reset}{\mathsf{reset}}
\newcommand{\resetop}[1]{{\reset(#1)}}
\newcommand{\sndop}[3]{{\mathsf{send}(#2, #1 : #3)}}
\newcommand{\rcvop}[3]{{\mathsf{receive}(#2, #1 : #3)}}
\newcommand{\copyc}[2]{\psi_{\mathsf{copy}}^{#1#2}}
\newcommand{\fract}[1]{\left\{#1\right\}}
\newcommand{\incrop}[1]{#1\texttt{++}}
\newcommand{\decrop}[1]{#1\texttt -\texttt-}
\newcommand{\guessop}[1]{{\mathsf{guess}(#1)}}
\newcommand{\test}{\mathsf{test}}
\newcommand{\testop}[1]{{\test(#1)}}
\newcommand{\restrict}[2]{\left.#1\right|_{#2}}
\renewcommand{\vec}[1]{\overline{#1}}
\newcommand{\eqv}[1]{\equiv_{#1}}
\newcommand{\condone}[1]{\mathbbm 1_{#1?}}
\newcommand{\psimod}{\psi_{\mathsf{mod}}}
\newcommand{\orbit}[1]{\mathrm{O}(#1)}
\newcommand*{\threesim}{%
      \mathrel{\vcenter{\offinterlineskip
      \hbox{$\sim$}\vskip-.35ex\hbox{$\sim$}\vskip-.35ex\hbox{$\sim$}}}}
\newcommand{\sem}[1]{\left\ldbrack#1\right\rdbrack}
\newcommand{\desem}[1]{\left\ldbrack#1\right\rdbrack^\textsf{de}}
\newcommand{\desynchto}[1]{\goesto{#1}^\textsf{de}}
\newcommand{\rvsem}[1]{\left\ldbrack#1\right\rdbrack^\textsf{rv}}
\newcommand{\rvto}[1]{\goesto{#1}\mbox{}\!\!^\textsf{rv}}
\newcommand{\floor}[1]{\lfloor#1\rfloor}
\newcommand{\tuple}[1]{\left\langle#1\right\rangle}
\newcommand{\family}[3]{(#1^{#2})_{#2 \in #3}}
\newcommand{\set}[1]{\left\{ #1 \right\}}
\newcommand{\setof}[2]{\set{#1 \; \middle| \; #2}}
\newif\ifstartedinmathmode
\renewcommand*{\st}{
  \relax\ifmmode\startedinmathmodetrue\else\startedinmathmodefalse\fi
  \ifstartedinmathmode{\;\cdot\;}\else{s.t.~}\fi%
}
\newcommand{\ie}{i.e.\xspace}
\newcommand{\wlg}{w.l.o.g.~}
\newcommand{\Wlg}{W.l.o.g.~}
\newcommand{\wrt}{w.r.t.~}
\newcommand{\cf}{cf.~}
\newcommand{\vs}{vs.~}
\newcommand{\resp}{resp.\xspace}
\newcommand{\stkout}[1]{\ifmmode\text{\sout{\ensuremath{#1}}}\else\sout{#1}\fi}
\DeclareMathAlphabet{\mathcalligra}{T1}{calligra}{m}{n}
\DeclareFontShape{T1}{calligra}{m}{n}{<->s*[2.2]callig15}{}
\newcommand{\scriptr}[1]{\ensuremath{\mathcalligra{#1}}}
\newcommand{\arr}{\scriptr r}
\newcommand{\myitem}[1]{\noindent \textbf{(#1)}}
\newcommand{\interitem}[2]{\intertext{\myitem {#1} {#2}}}
\newcommand{\mysubparagraph}[1]{\vspace{-1em}\subparagraph{#1}}
\title{Decidability of Timed Communicating Automata}
\author{Lorenzo Clemente}{University of Warsaw\\{Warsaw, Poland}}{clementelorenzo@gmail.com}{0000-0003-0578-9103}{}
\authorrunning{L. Clemente}
\subjclass{\ccsdesc[500]{Theory of computation~Distributed computing models}.
\ccsdesc[500]{Theory of computation~Timed and hybrid models}.
\ccsdesc[300]{Theory of computation~Logic}}
\keywords{timed automata, communicating automata, reachability problem, quantifier elimination, register automata}
\begin{document}

\maketitle

\begin{abstract}
	We study the reachability problem for networks of timed communicating processes.
	Each process is a timed automaton 
	communicating with other processes by exchanging messages over unbounded FIFO channels.
	Messages carry clocks which are checked at the time of transmission and reception with suitable timing constraints.
	Each automaton can only access its set of local clocks and message clocks of sent/received messages.
	Time is dense and all clocks evolve at the same rate.
	Our main contribution is a complete characterisation of decidable and undecidable communication topologies
	generalising and unifying previous work.
	From a technical point of view,
	we use quantifier elimination and a reduction to counter automata with registers.
\end{abstract}


\section{Introduction}
\label{sec:introduction}

\emph{Timed automata} (\ta) were introduced almost thirty years ago by Alur and Dill \cite{AlurDill:ICALP:1990,AlurDill:NTA:TCS:1994}
as a decidable model of real-time systems elegantly combining finite automata with timing constraints over a densely timed domain.
To these days, \ta are still an extremely active research area,
as testified by recent works on topics such as
the reachability problem \cite{HerbreteauSrivathsanWalukiewicz:IC:2016},
a novel analysis technique based on tree automata 
\cite{AkshayGastinKrishnaSarkar:CONCUR:2017},
and 
the binary reachability relation \cite{QuaasShirmohammadiWorrell:LICS:2017}.
Decidability results on \ta have been extended to include discrete data structures
such as counters \cite{TimedPetriNets:05,AbdullaMahataMayr:DTPNs:LMCS2007},
stacks \cite{bouajjani:timed:PDA:94,Dang:PTA:2003,TrivediWojtczak:RTA:2010,BenerecettiMinopoliPeron:RTA:2010,AbdullaAtigStenman:DensePDA:12,Quaas:LMCS:2015,ClementeLasota:LICS:2015,ClementeLasotaLazicMazowiecki:LICS:2017,ClementeLasota:ICALP:2018},
and lossy FIFO channels \cite{AbdullaAtigCederberg:TLCS:FSTTCS12};
\cf the recent survey \cite{WaezDingelRudie:CSR:2013} for more examples of \ta extensions.

In this paper, we study \emph{systems of timed communicating automata} (\stca) \cite{KrcalYi:CTA:CAV:2006},
which are networks of \ta processes
exchanging messages over FIFO channels (queues) of unbounded size%
\footnote{The original name \emph{communicating timed automata} \cite{KrcalYi:CTA:CAV:2006} refers to a version of \stca with untimed channels.
In order to stress that we consider timed channels, we speak about timed communicating automata.}.
Messages are additionally equipped with densely-valued clocks which elapse at the same rate as local \ta clocks.
When a message is sent, a logical constraint between local and message clocks specifies the initial values for the latter;
if multiple values are allowed, a satisfying one is chosen nondeterministically.
Symmetrically, when a message is received, a logical constraint on local and message clocks
specifies whether the reception is possible.
%

We consider three kinds of clocks:
\emph{classical clocks} over the rationals $\Q$,
\emph{integral clocks} over the nonnegative integers $\N$,
and \emph{fractional clocks} over the unit interval $\I := \Q \cap [0, 1)$.
All clocks evolve at the same rate;
an integral clock behaves the same as a classical clock,
except that in constraints it evaluates to the underlying integral part;
when a fractional clock reaches value $1$,
its value is wrapped around $0$.
Integral and fractional clocks are complementary in the sense that they express two perpendicular features of time:
Integral clocks are unbounded but discrete,
and fractional clocks are bounded but dense.
For classical and integral clocks $\x, \y$,
we consider \emph{inequality} $\x - \y \sim k$ and \emph{modulo} $\x - \y \eqv m k$ constraints;
for fractional clocks $\x, \y : \I$ we consider \emph{order} constraints $\x \sim \y$,
where $\sim \in \set {<, \leq, \geq, >}$.
In the presence of fractional clocks, constraints on classical and integral clocks are inter-reducible.
Nevertheless, we consider separately classical, integral, and fractional clocks, mainly for two reasons.
First, in our main result below we can point out with greater precision what makes the model computationally harder.
Second, from a technical standpoint
it is sometimes more convenient to manipulate classical clocks%
---their constraints are invariant \wrt the elapse of time;
sometimes integral clocks%
---they reduce the impedance when converting to counters.
%

The \emph{non-emptiness problem} asks whether there exists an execution of the \stca
where all processes start and end in predefined control locations,
with empty channels both at the beginning and at the end of the execution.
It is long-known that already in the untimed setting of communicating automata (\ca)
the model is Turing-powerful \cite{BrandZafiropulo:CFM:ACM:1983},
and thus all verification questions such as non-emptiness are undecidable.
Decidability can be regained by restricting the \emph{communication topology},
i.e., the graph where vertices are processes $\p$,
and there is an edge $\p \to \q$
whenever there is a channel from process $\p$ to process $\q$.
A \emph{polytree} is a topology whose underlying undirected graph is a tree;
%
a \emph{polyforest} is a disjoint union of polytrees.
Our main result is a complete characterisation of the decidable \stca topologies.

\begin{restatable*}{theorem}{thmmain}
	\label{thm:main}
	Non-emptiness of \stca is decidable if, and only if,
	the communication topology is a polyforest
	\st in each polytree there is at most one channel with inequality tests.
\end{restatable*}
\noindent
Notice that fractional clocks do not influence decidability,
as neither do modulo constraints;
the characterisation depends only on which polytrees contain inequality tests,
on classical or integer clocks.
\ignore{
A similar characterisation was previously shown for \stca with \emph{untimed channels},
i.e., channels where messages do not carry any additional clocks.
More precisely, when all channels are untimed,
it was shown in \cite[Theorem 5]{ClementeHerbreteauStainerSutre:FOSSACS13}
that non-emptiness is decidable if, and only if, the topology is a polyforest.
Since an untimed channel in particular does not have integral inequality tests,
our results subsumes \cite{ClementeHerbreteauStainerSutre:FOSSACS13}.

If we additionally allow \emph{emptiness} channel tests, or, equivalently \emph{urgent receptions},
\emph{over discrete time} emptiness is decidable if, and only if,
the topology is a polyforest
and moreover in each polytree there is at most one channel that can be tested for emptiness
\cite[Theorem 3]{ClementeHerbreteauStainerSutre:FOSSACS13}.
Since integral inequality tests on message clocks can simulate emptiness channel tests and urgent receptions (\cf Sec.~\ref{sec:result}),
and since dense time subsumes discrete time,
our characterisation subsumes this result as well.
}
\noindent
This subsumes recent analogous characterisations
for \stca with untimed channels
in discrete \cite[Theorem 3]{ClementeHerbreteauStainerSutre:FOSSACS13} and dense time \cite[Theorem 5]{ClementeHerbreteauStainerSutre:FOSSACS13}.
It is worth remarking that we consider timed channels,
which were not previously considered
with the exception of the work \cite{AbdullaAtigKrishna:Arxiv:2017},
which however discussed only discrete time.
More precisely, it was shown there that,
with (integral) non-diagonal inequality tests of the form $\x \sim k$,
the topology $\p \to \q$ is decidable \cite[Theorem 4]{AbdullaAtigKrishna:Arxiv:2017},
while $\p \to \q \to \r$ is undecidable \cite[Theorem 3]{AbdullaAtigKrishna:Arxiv:2017}.
Since our undecidability result holds already in discrete time,
it follows from Theorem~\ref{thm:main} that $\p \to \q \to \r$ is undecidable;
additionally, new undecidable topologies can be deduced,
such as $\p \to_1 \q \to_2 \r \to_3 \s$
with $\to_1, \to_3$ with integral inequality tests and $\to_2$ untimed.

Regarding decidability,
Theorem~\ref{thm:main} vastly generalises all the previously known decidability results,
since it considers the more challenging case of timed channels,
it includes more topologies,
a richer set of clocks comprising both classical, integral, and fractional clocks,
a richer set of constraints comprising both diagonal and non-diagonal constraints,
and the more general setting of dense time.
In particular, combining timed channels with diagonal constraints on message and local clocks
was not previously considered.
Our characterisation completes the picture of decidable \stca topologies in dense time.

\mysubparagraph{Technical contribution.}

While our undecidability results are essentially inherited from \cite{ClementeHerbreteauStainerSutre:FOSSACS13},
the novelty of our approach consists in two main technical contributions of potentially independent interest,
which are used to show decidability.
First, we show that 
diagonal channel constraints reduce to non-diagonal ones
by the method of \emph{quantifier elimination};
\cf Lemma~\ref{lemma:copy-send} in Sec.~\ref{sec:simple}. 
This is a novel technique in the study of timed models
%
and we believe that its application to the study of timed models has independent interest,
as recently shown in the analysis of timed pushdown automata \cite{ClementeLasota:ICALP:2018}.

\ignore{
We then use two standard techniques in the analysis of \stca.
First, we consider a \emph{desynchronised semantics}
where we allow receiver processes to move ahead of time \wrt sender processes
(sound over-approximation; no constraint on the topology).
Second, we consider a \emph{rendezvous semantics}
where we force transmissions and corresponding receptions to happen simultaneously
(sound under-approximation; only for polyforest topologies).
These two semantics taken together allow us to remove channels over polyforest topologies
at the cost of introducing non-negative counters
measuring the integer desynchronisation between sender and receive processes
(\cf \cite{KrcalYi:CTA:CAV:2006,ClementeHerbreteauStainerSutre:FOSSACS13,AbdullaAtigKrishna:Arxiv:2017});
if a channel has integral inequality tests, then the corresponding counter has zero tests.
}

Our second technical contribution is the encoding of fractional clocks into $\I$-valued registers
over the \emph{cyclic order} $K \subseteq \I^3$,
\ie, the ternary relation $K(a, b, c)$ that holds whenever going clockwise on the unit circle starting at $a$,
we first visit $b$, and then $c$.
Cyclic order provides the most suitable structure to handle fractional values
and simplifies the technical development.
We believe this has wider application to the analysis of timed systems.
%
%
%
%
%
%

With the two technical tools above in hand,
for a given \stca over a polyforest topology
we build an equivalent \emph{register automaton with counters} (\rac) of exponential size. 
We establish decidability of non-emptiness for \rac by reducing to finite automata with counters.
If every polytree has at most one channel with integral inequality tests,
then one zero tests suffices,
and the latter model is decidable \cite{Reinhardt:TCS:2008,Bonnet:MFCS:2011}.
In the simpler case that no channel has integral inequality tests,
we obtain just a Petri net, for which reachability is decidable
\cite{Mayr:PN:Reach:STOC:1981,Kosaraju:VASS:STOC:1982,Lambert:TCS:1992,LerouxSchmitz:LICS:2015}
and EXPSPACE-hard \cite{Lipton:1976};
the exact complexity of Petri nets is a long-standing open problem.

\mysubparagraph{Related work.}

\emph{Communicating automata} (\ca) were introduced in the early 80's
as a fundamental model of concurrency \cite{BrandZafiropulo:CFM:ACM:1983,Pachl:CA:1982}.
As a way of circumventing undecidability,
restricting the communication topology to polyforest has been already cited \cite{Pachl:CA:1982,TorreMadhusudanParlato:TACAS:2008}.
Other popular methods include
allowing messages to be nondeterministically lost
\cite{CeceFinkel:IC:1996,AbdullaJonsson:LCS:IC:1996,ChampartSchonebelen:CONCUR:08}
(later generalised to include priorities \cite{HaaseSchmitzSchoebelen:LMCS:2014});
restricting the analysis to half-duplex communication \cite{CeceFinkel:IC:2005}
(later generalised to mutex communication \cite{HeusnerLerouxMuschollSutre:LMCS:2012});
%
%
restricting the communication policy to bounded context switching \cite{TorreMadhusudanParlato:TACAS:2008};
weakening the FIFO semantics to the bag semantics allowing for the reordering of messages \cite{ClementeHerbreteauSutre:CONCUR:2014}.
The model of \ca has been extended in diverse directions,
such as \ca with counters \cite{HeussnerLeGallSutre:2012:FSTTCS},
with stacks \cite{HeusnerLerouxMuschollSutre:LMCS:2012},
lossy \ca with data \cite{AbdullaAiswaryaAtig:LICS:2016},
and time \cite{AbdullaAtigCederberg:TLCS:FSTTCS12}.





\section{Preliminaries}
\label{sec:preliminaries}

Let $\N$ be the set of natural, $\Z$ the integer, $\Q$ the rational, and $\Q_{\geq0}$ the nonnegative rational numbers.
Let $\I := \Q \cap [0, 1)$ be the rational unit interval.
For $a \in \Q$, let $\floor a \in \Z$ and $\fract a \in \I$ denote its integral and, \resp, fractional part;
for $b \in \Q$, let the \emph{cyclic difference} be $a \ominus b = \fract {a - b}$
and the \emph{cyclic addition} be $a \oplus b = \fract {a + b}$.
For $a, k \in \Z$,
let $a \eqv m k$ denote the congruence modulo $m \in \N\setminus\set 0$,
which we extend to $a \in \Q$ by $a \eqv m k$ iff $\floor a \eqv m k$.
For a set of variables $X$ and a domain $A$,
let $A^X$ be the set of valuations for variables in $X$ taking values in $A$.
For a valuation $\mu \in A^X$, a variable $x \in X$, and a new value $a \in A$,
let $\mu[x \mapsto a]$ be the new valuation which assigns $a$ to $x$,
and agrees with $\mu$ on $X \setminus \set x$.
For a subset of variables $Y \subseteq X$, let $\restrict \mu Y \in A^Y$
be the restricted valuation agreeing with $\mu$ on $Y$.
For two disjoint domains $X, Y$ and $\mu \in A^X, \nu \in A^Y$,
let $(\mu \cup \nu) \in A^{X \cup Y}$ be the valuation which agrees with $\mu$ on $X$ and with $\nu$ on $Y$.

\mysubparagraph{Labelled transition systems.}
A \emph{labelled transition system} (\lts) $\A$ is a tuple $\tuple {C, c_I, c_F, A, \to}$
where $C$ is a set of \emph{configurations},
with $c_I, c_F \in C$ two distinguished \emph{initial} and \emph{final} configurations, \resp,
$A$ a set of \emph{actions},
and $\to \,\subseteq\, C \times A \times C$ a \emph{labelled transition relation}.
For simplicity, we write $c \goesto a d$ instead of $(c, a, d)\! \in\, \to$,
and for a sequence of actions $w = a_1 \cdots a_n \in A^*$
we overload this notation as $c \goesto w d$
if there exist intermediate states $c = c_0, c_1, \dots, c_n = d$
s.t., for every $1 \leq i \leq n$, $c_{i-1} \goesto {a_i} c_i$.
For a given LTS $\A$,
the \emph{non-emptiness problem} asks whether there is a sequence of actions $w \in A^*$
\st $c_I \goesto w c_F$.

\mysubparagraph{Clock constraints.}
%
%
%
%
Let $\X$ be a set of clocks of type either \emph{classical} $\x : \Q$, \emph{integral} $\x : \N$, or \emph{fractional} $\x : \I$.
A \emph{clock constraint} over $\X$ is a boolean combination of the \emph{atomic constraints}
\begin{align*}
	&						
		&\textrm{(ineq}&\textrm{uality)}
		&\textrm{(mo}&\textrm{dular)}
		&\textrm{(or}&\textrm{der)} \\
	&\textrm{(non-diagonal)}	
		&\x_0 & \leq k 	\ \  			&\x_0 & \eqv m k  	\ \   	 	&\y_0 & = 0   \\
	&\textrm{(diagonal)}	
		&\x_0 - \x_1 & \leq k	&\x_0 - \x_1 & \eqv m k		&\y_0 & \leq \y_1.
\end{align*}
where $\x_0, \x_1$ are either both classical or integral clocks,
$\y_0, \y_1$ fractional clocks,
$m \in \N$, and $k \in \Z$.
As syntactic sugar we also allow $\true$
and variants with any $\sim \ \in \{\leq, <, \geq, >\}$ in place of $\leq$.
A \emph{clock valuation} 
is a mapping $\mu \in \Qpos^\X$ assigning a non-negative rational number to every clock in $\X$.
%
%
Let $\vec 0$ be the clock valuation $\mu$ \st $\mu(\x) = 0$ for every clock $\x \in \X$.
For a valuation $\mu$ and a clock constraint $\varphi$, 
$\mu$ \emph{satisfies} $\varphi$,
written $\mu \models \varphi$,
if $\varphi$ is satisfied when classical clocks $\x : \Q$ are evaluated as $\mu(\x)$,
integral clocks $\x : \N$ as $\floor{\mu(\x)}$,
and fractional clocks $\y : \I$ as $\fract {\mu(\y)}$.
In particular, $\mu \models (\x_0 - \x_1 \eqv m k)$ is equivalent to
$\floor {\mu(\x_0) - \mu(\x_1)} \eqv m k$
if $\x_0, \x_1 : \Q$ are classical clocks,
and to $\floor {\mu(\x_0)} - \floor{\mu(\x_1)} \eqv m k$
if $\x_0, \x_1 : \N$ are integral clocks.

\ignore{
\begin{remark}[$\floor x - \floor y$ versus $\floor {x - y}$]
	\label{rem:int-diff}
	In the presence of fractional constraints
	the expressive power would not change if,
	instead of atomic constraints $\floor x - \floor y \eqv m k$ and $\floor x - \floor y \leq k$
	speaking of the \emph{difference of the integer parts},
	we would choose $\floor{x - y} \eqv m k$ and  $\floor{x-y} \leq k$
	speaking of the \emph{integer part of the difference},
	since the two are inter-expressible:
	%
	\begin{align}
		\label{eq:floor:fract}
		\floor {x - y} = \floor x - \floor y - \condone {\fract x < \fract y}
			\qquad \textrm{ and } \qquad
		\fract {x - y} &= \fract x - \fract y + \condone {\fract x < \fract y}.
	\end{align}
\end{remark}
}

\mysubparagraph{Timed communicating automata.}
A \emph{communication topology} is a directed graph $\T = \tuple {\P, \C}$
with nodes $\P$ representing \emph{processes}
and edges $\C \subseteq \P \times \P$ representing \emph{channels}
$\pq \in \C$ whenever $\p$ can send messages to $\q$.
%
%
We do not allow multiple channels from $\p$ to $\q$ 
since such a topology would have an undecidable non-emptiness problem (stated below).

A \emph{system of timed communicating automata} (\stca) is a tuple
$\S = \tuple{\T, \M, \family \X \c \C, \family \A \p \P}$
where $\T = \tuple {\P, \C}$ is a communication topology,
$\M$ a finite set of \emph{messages},
$\X^\c$ a set of \emph{channel clocks} for messages sent on channel $\c \in \C$,
and, for every $\p \in \P$,
$\A^\p = \tuple {\L^\p, \ell^\p_I, \ell^\p_F, \X^\p, \Op^\p, \Delta^\p}$
is a \emph{timed communicating automaton} 
with the following components:
$\L^\p$ is a finite set of \emph{control locations},
with $\ell^\p_I, \ell^\p_F \in \L^\p$ two distinguished \emph{initial} and \emph{final} locations therein,
$\X^\p$ a set of \emph{local clocks},
and $\to^\p \subseteq \L^\p \times \Op^\p \times \L^\p$ a set of transitions
of the form $\ell \goesto \op\mbox{}\!^\p \arr$,
where $\op \in \Op^\p$ determines the kind of transition:
\vspace{-1ex}
\begin{itemize}
	\item $\op = \nop$ is a local operation without side effects;
	\item $\op = \elapse$ is a global time elapse operation
	which is executed by all processes at the same time;
	all local and channel clocks evolve at the same rate;
	\item $\op = \testop\varphi$ is a operation testing the values of clocks $\X^\p$
	against the \emph{test constraint} $\varphi$;
	\item $\op = \resetop {\x^\p}$ resets clock $\x^\p \in \X^\p$ to zero;
	\item $\op = \sndop \m \pq \psi$ sends message $\m \in \M$ to process $\q$ over channel $\pq \in \C$;
	the \emph{send constraint} $\psi$ over $\X^\p \cup \X^\pq$ specifies the initial values of channel clocks;
	\item $\op = \rcvop \m \qp \psi$ receives message $\m \in \M$ from process $\q$ via channel $\qp \in \C$;
	the \emph{receive constraint} $\psi$ over $\X^\p \cup \X^\qp$ specifies the final values of channel clocks.
\end{itemize}
We allow transitions
$\p \goesto {\op_1; \dots; \op_n} \q$
containing a sequence of operations as syntactic sugar.
We assume \wlg that test constraints $\varphi$'s are atomic,
that $M$ is the maximal constant used in any inequality or modulo constraint,
that all modular constraints $\eqv M$ are over the same modulus $M$,
that all the sets of local $\X^\P := \family \X \p \P$ and channel clocks $\X^\C := \family \X \c \C$ are disjoint,
and similarly for the sets of locations $\L^\p$ and thus operations $\Op^\p$;
consequently, we can just write $\ell \goesto \op \arr$ without risk of confusion.
A \stca has \emph{untimed channels} if $\X^\C = \emptyset$.
A channel $\c \in \C$ has \emph{inequality tests} if there exists
at least one operation $\sndop \m \c \psi$ or $\rcvop \m \c \psi$
where $\psi$ is an inequality constraint
$\x_0 \sim k$ or $\x_0 - \x_1 \sim k$ over (classical or integral) channel clocks $\x_0, \x_1 \in \X^\C$.
%

	%

\mysubparagraph{Semantics.}
%
%
A \emph{channel valuation} is a family $w = \family w \c \C$
of sequences $w^\c \in (\M \times \Qpos^{\X^\c})^*$ of pairs $(\m, \mu)$,
where $\m$ is a message and $\mu$ is a valuation for channel clocks in $\X^\c$.
%
%
For $\delta \in \Qpos$, let $\mu + \delta$ be the clock valuation which adds $\delta$ to the value of every clock,
i.e., $(\mu + \delta)(\x) := \mu(\x) + \delta$,
and for a channel valuation $w = \family w \c \C$ with $w^\c = (\gamma^\c_1, \mu^\c_1) \cdots (\gamma^\c_{k_\c}, \mu^\c_{k_\c})$
let $w + \delta = \family {w'} \c \C$ where $w'^\c = (\gamma^\c_1, \mu^\c_1 + \delta) \cdots (\gamma^\c_{k_\c}, \mu^\c_{k_\c} + \delta)$.
The semantics of a \stca $\S$ is given as the infinite \lts $\sem \S = \tuple {C, c_I, c_F, A, \goesto{}}$,
where the set of configurations $C$ consists of triples $\tuple {\family \ell \p \P, \mu, \family w \c \C}$
of control locations $\ell^\p$ for every process $\p \in \P$,
a local clock valuation $\mu \in \Qpos^\X$,
and channel valuations $w^\c$ for every channel $\c$;
the initial configuration is $c_I = \tuple {\family {\ell_I} \p \P, \vec 0, (\varepsilon)_{\c \in \C}}$,
where $\ell^\p_I$ is the initial location of $\p$,
all local clocks are initially $0$,
and all channels are initially empty;
similarly, the final configuration is $c_F = \tuple {\family {\ell_F} \p \P, \vec 0, (\varepsilon)_{\c \in \C}}$;
%
the set of actions is $A = \bigcup_{\p \in \P} \Op^\p \cup \Qpos$,
and transitions are determined as follows.
For a duration $\delta \in \Qpos$ we have a transition
\begin{gather}
	\label{eq:elapse}
	\tag{\dag}
 	\tuple {\family \ell \p \P, \mu, u} \goesto \delta \tuple {\family \arr \p \P, \nu, v}
\end{gather}
if \emph{for all} processes $\p$ there is a time elapse transition $\ell^\p \goesto \elapse \arr^\p$,
$\nu = \mu + \delta$,
and $v = u + \delta$.
For an operation $\op \in \Op^\p$, we have a transition
%
${\tuple {\family \ell \p \P, \mu, u = \family u \c \C} \goesto \op \tuple {\family \arr \p \P, \nu, v = \family v \c \C}}$
%
whenever $\p$ has a transition $\ell^\p \goesto \op \arr^\p$,
for every other process $\q \neq \p$ the control location $\arr^\q = \ell^\q$ stays the same,
and $\nu, v$ are determined by a case analysis on $\op$:
\vspace{-2ex}
\begin{itemize}
	\item if $\op = \nop$, then $\nu = \nu$, and $v = u$;
	\item if $\op = \testop\varphi$, then $\mu \models \varphi$, $\nu = \mu$, and $v = u$;
	\item if $\op = \resetop {\x^\p}$, then $\nu = \mu[\x^\p \mapsto 0]$, and $v = u$;
	\item if $\op = \sndop \m \pq \psi$, then $\nu = \mu$,
	there exists a valuation for clock channels $\mu^\pq \in \Qpos^{\X^\pq}$
	\st $ \mu \cup \mu^\pq \models \psi$,
	message $\m$ is added to this channel $v^\pq = (\m, \mu^\pq) \cdot u^\pq$,
	and every other channel $\c \in \C \setminus \set {\pq}$ is unchanged $v^\c = u^\c$;
	\item if $\op = \rcvop \m \qp \psi$, then $\nu = \mu$,
	message $\m$ is removed from this channel $u^\qp = v^\qp \cdot (\m, \mu^\qp)$
	provided that clock channels satisfy $\mu \cup \mu^\qp \models \psi$,
	and every other channel $\c \in \C \setminus \set {\qp}$ is unchanged $v^\c = u^\c$.
\end{itemize}
\stca $\S, \S'$ are \emph{equivalent} if the non-emptiness problem has the same answer for $\sem\S$, $\sem{\S'}$.

\section{Main result}
\label{sec:result}

%
%

We characterise completely which \tca topologies have a decidable non-emptiness problem.
\thmmain

\begin{remark}[Inequality \vs emptiness tests]
	A similar characterisation for \emph{untimed} channels appeared previously in \cite{ClementeHerbreteauStainerSutre:FOSSACS13},
	where channels can be tested for emptiness.
	In that setting, it was shown that non-emptiness of discrete-time \stca with untimed channels
	is decidable precisely for polyforest topologies
	where in each polytree there is at most one channel which can be tested for emptiness.
	Since a timed channel with inequality tests
	can simulate an untimed channel with emptiness tests,
	our decidability result generalises \cite{ClementeHerbreteauStainerSutre:FOSSACS13}
	to the more general case of timed channels,
	and our undecidability result follows from their characterisation.
	The simulation is done as follows.
	Suppose processes $\p, \q$ want to cooperate
	in such a way that $\q$ can test whether the channel $\pq$ is empty.
	Time instants are split between even and odd instants.
	All standard operations of $\p, \q$ are performed at odd instants.
	At even time instants, $\p$ sends to $\q$ a special message $\hat \m$ with initial age $0$
	by performing $\sndop {\hat\m} \pq {\x^\pq = 0}$.
	Process $\q$ simulates an emptiness test on $\pq$
	by receiving message $\hat \m$ with the same age $0$
	$\rcvop {\hat\m} \pq {\x^\pq = 0}$.
	This is indeed correct because if some other message $\m$ was sent by $\p$ afterwards,
	then $\hat \m$ would have age $\geq 1$,
	since all other operations happen at odd instants.
\end{remark}

\begin{proof}[Proof of the ``only if'' direction]
	If the topology is not a polyforest, \ie, it contains an undirected cycle,
	then it is well-known that non-emptiness is undecidable already in the untimed setting
	\cite{BrandZafiropulo:CFM:ACM:1983,Pachl:CA:1982}.
	If the topology is a polyforest, but it contains a polytree with more than one timed channel with integral inequality tests,
	then undecidability follows from \cite[Theorem 3]{ClementeHerbreteauStainerSutre:FOSSACS13}	already in discrete time,
	since non-emptiness tests (on the side of the receiver) can be simulated by timed channels with inequality tests
	as remarked above.
\end{proof}

\mysubparagraph{Plan.}
The rest of the paper is devoted to the decidability proof.
In Sec.~\ref{sec:simple} we simplify the form of constraints.
In Sec.~\ref{sec:desync} we define a {more flexible} \emph{desynchronised} semantics \cite{KrcalYi:CTA:CAV:2006} for the elapse of time,
and in Sec.~\ref{sec:rendezvous} a {more restrictive} \emph{rendezvous} semantics \cite{Pachl:CA:1982} for the exchange of messages.
%
%
Applying these two semantics in tandem allows us to remove channels at the cost of introducing counters (\cf\cite{ClementeHerbreteauStainerSutre:FOSSACS13}).
Notice that fractional constraints are so far kept unchanged.
In Sec.~\ref{sec:register-counter} we introduce \emph{register automata with counters} (\rac)
where registers are used to handle fractional values,
and counters for integer values;
we show that reachability is decidable for \rac.
%
Finally, in Sec.~\ref{sec:translation} we simulate the rendezvous semantics of \stca by \rac.
%
%
Omitted proofs can be found in Sec.~\ref{app}.

\section{Simple \stca}
\label{sec:simple}

A \stca is \emph{simple} if: it contains only integral and fractional clocks;
send constraints are of the form $\x^\c = 0$ (for $\x^\c$ a channel clock);
receive constraints of the form $\x^\c \sim k$, $\x^\c \eqv M k$
for an integral clock $\x^\c : \N$,
and of the form $\y^\pq = \y^\q$
for fractional clocks $\y^\pq, \y^\q : \I$.
We present a non-emptiness preserving transformation of a given \stca into a simple one.

\mysubparagraph{Remove integral clocks.}

We remove integral clocks,
by expressing their constraints as combinations of classical and fractional constraints.
Unlike integral and fractional constraints,
classical constraints $\x - \y \sim k$ with $\x, \y : \Q$
are invariant under the elapse of time.
For every integral clock $\x : \N$,
we introduce a classical $\x_\Q : \Q$ and a fractional clock $\x_\I : \I$
which are reset at the same moment as $\x$.
A constraint $\x - \y \leq k$ on clocks $\x, \y : \N$
is replaced by the equivalent
$(\x_\Q - \y_\Q \leq k \wedge \x_\I \leq \y_\I) \vee \x_\Q - \y_\Q \leq k - 1$.
%
%
%
The same technique can handle modulo constraints and channel clocks.

\ignore{
\mysubparagraph{Uniform channel constraints.}
A send or receive channel constraint $\psi$ is \emph{uniform}
if $\psi$ contains only local and channel clocks of the same type,
\ie, either only classical clocks (for inequality constraints),
or only integral clocks (for modulo constraints),
or only fractional clocks (for order constraints).
We can assume \wlg that there are only uniform (channel) constraints
(test constraints in local transitions are already uniform since they are atomic).
Indeed, let $\op = \sndop m \pq \psi$ be a send operation,
with the channel constraint $\psi$ split as $\psi_\Q \wedge \psimod \wedge \psi_\I$
where $\psi_\Q$ contains only classical,
$\psimod$ only modulo,
and $\psi_\I$ only fractional constraints.
We can split $\op$ as a sequence of three send operations $\sndop m \pq {\psi_\Q}$ (\emph{classical send}),
$\sndop m \pq \psimod$ (\emph{modulo send}), and
$\sndop m \pq {\psi_\I}$ (\emph{fractional send}).
A similar modification is performed on receive operations.
In this way, we can assume \wlg that channel operations are either classical, modulo, or fractional.
}

\mysubparagraph{Copy-send.}

A \stca is \emph{copy-send} if channel clocks are always copies of local clocks of the sender process,
i.e., $\X^\pq = \setof {\hat \x_i^\pq } {\x_i^\p \in \X^\p}$,
and all send constraints of process $\p$ are equal to
\begin{align}
	\copyc \p {} \ \equiv\ \bigwedge_{\x^\p_i \in \X^\p} \hat \x^\pq_i = \x^\p_i.
\end{align}
%

\begin{lemma}
	\label{lemma:copy-send}
	Non-emptiness of \stca's reduces to non-emptiness of copy-send \stca's.
\end{lemma}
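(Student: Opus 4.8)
The plan is to turn an arbitrary (integral-free, i.e.\ already partly simplified) \stca $\S$ into a copy-send \stca $\S'$ in two moves: first make the channel clock set a literal copy of the sender's local clock set and make sending trivial, pushing all the information formerly carried by the send constraint into a \emph{derived} constraint at the receiver; second, eliminate the now-existentially-quantified auxiliary variables from that derived receive constraint by quantifier elimination, so that the resulting receive constraints are again boolean combinations of atomic constraints in the allowed fragment. The soundness direction is immediate by construction (every run of $\S'$ maps to a run of $\S$ that chooses the same channel-clock values), so the work is the completeness direction and the fact that the transformation stays inside the syntactic class of \stca.

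First I would fix a channel $\pq$ and replace $\X^\pq$ by a fresh copy $\setof{\hat\x_i^\pq}{\x_i^\p\in\X^\p}$ of \emph{all} local clocks of the sender $\p$. Each operation $\sndop{\m}{\pq}{\psi}$ is rewritten as $\sndop{\m}{\pq}{\copyc{\p}{}}$, i.e.\ the message simply records a snapshot of $\p$'s current clock valuation; correspondingly, each receive $\rcvop{\m}{\qp}{\psi}$ of $\q$ is replaced by $\rcvop{\m}{\qp}{\widetilde\psi}$ where $\widetilde\psi$ is obtained from the original $\psi$ (over $\X^\q\cup\X^\qp$, with $\X^\qp$ the \emph{old} channel clocks) by expressing the old channel-clock constraints in terms of the snapshot clocks. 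Concretely, if at send time $\psi$ required the channel clocks to satisfy some relation with $\p$'s clocks, and between sending and receiving a common amount of time $\delta$ has elapsed and resets have occurred, then the value of an old channel clock at reception equals the value of the corresponding snapshot clock $\hat\x_i^\pq$ elapsed by the same $\delta$ — \emph{except} that the new channel model stores the literal reset-able snapshot, so what I actually need is: introduce $\delta$ as an existentially quantified nonnegative real standing for the desynchronisation, and assert the conjunction of the old send constraint (with old channel clocks replaced by $\hat\x_i^\pq + \delta$ handled via cyclic/classical arithmetic) and the old receive constraint (ditto). This is the key modelling step and the place where one must be careful that the classical clocks' time-invariance (noted just above the lemma) makes the $+\delta$ shift harmless for inequality and modulo constraints, while for fractional clocks one works with $\oplus\delta$ and the cyclic order.

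The main obstacle is the second move: the rewritten receive constraint $\widetilde\psi$ now contains the auxiliary quantified variable $\delta$ (and possibly auxiliary quantified channel-clock values for resets), and I must eliminate it to land back in the allowed atomic fragment ($\x^\c\sim k$, $\x^\c\eqv M k$ for integral — already removed — and $\y^\pq=\y^\q$ for fractional). For classical/integral inequality and modulo atoms this is the standard elimination of a single existentially quantified variable from a conjunction of difference constraints (Fourier–Motzkin for the inequalities, plus a case split modulo $M$ for the congruences), which terminates and stays in the same fragment; for fractional atoms it is elimination of one variable over the dense cyclic order, again a finite case analysis on the relative cyclic position of the other fractional clocks. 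I would present this elimination uniformly, remarking that the result may blow up the formula but only polynomially in the number of clocks and the constant $M$ (hence the exponential-size claim for the final \rac is unaffected). Finally I would argue equivalence: a run of $\S'$ projects to a run of $\S$ by forgetting the extra snapshot clocks, and conversely a run of $\S$ lifts to $\S'$ by choosing $\hat\x_i^\pq$ at send time to be $\x_i^\p$ and letting the witness for $\delta$ be the actual elapsed time between the matching send and receive, which satisfies $\widetilde\psi$ exactly when the original $\psi$ held at both endpoints — this uses that between a send and its unique FIFO-matched receive the same global time elapses for the message clocks and that resets of $\p$'s clocks after the send do not affect the already-stored snapshot.
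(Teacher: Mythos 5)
Your overall strategy --- snapshot the sender's local clocks onto the channel, trivialise the send constraint, fold both original constraints into a derived receive constraint, and return to the atomic fragment by quantifier elimination --- is exactly the paper's. But the proposal is imprecise at the one point where the lemma has real content: \emph{which} variables are existentially quantified in the derived receive constraint. In the original semantics the sender nondeterministically chooses an initial valuation $\mu^\pq$ for the channel clocks $\X^\pq$ subject to $\psi^\p$, and the receiver later checks $\psi^\q$ against $\mu^\pq+\delta$; so the derived constraint must have the shape $\exists \mu^\pq \st \psi'^\p \wedge \psi'^\q$, with the quantification running over the \emph{entire tuple of original channel clocks}, and the elimination must be iterated over all of them while checking that the produced atoms stay within finitely many syntactic types (this closure argument is the bulk of the paper's proof). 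Your write-up instead asserts that ``the value of an old channel clock at reception equals the value of the corresponding snapshot clock $\hat\x_i^\pq$ elapsed by the same $\delta$'', which is false in general: snapshots copy the sender's \emph{local} clocks, whereas the old channel clocks are freshly initialised subject to an arbitrary diagonal constraint $\psi^\p$ and need not be in bijection with $\X^\p$ at all. You then centre the elimination on the single variable $\delta$ and relegate the channel-clock quantifiers to a parenthetical misattributed to ``resets'' (channel clocks are never reset in flight), so as written the derived constraint cannot even be formulated.

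Two further points. First, the paper never quantifies $\delta$: it adds special clocks $\x_0^\p:\Q$ and $\y_0^\p:\I$ that are always zero at send time, so their snapshots $\hat\x_0^\pq, \hat\y_0^\pq$ are, at reception, readable witnesses of the integral and fractional elapsed time, and $\delta$ enters the substitutions as a free channel-clock term. Your quantified-$\delta$ variant could be made to work, but it adds a variable to eliminate and loses the clean substitution recovering $\mu(\x_i^\p)$ and $\nu(\x_i^\pq)$ from the snapshots. Second, the derived receive constraint depends on the \emph{pair} $(\psi^\p,\psi^\q)$, so the receiver must know which send constraint was used; the paper handles this by enlarging the message alphabet to $\tuple{\m,\psi^\p,\psi^\q}$. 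Your proposal omits this, and without it the construction is not well defined when several send transitions carry the same message $\m$ with different send constraints. (Your claim of only polynomial blow-up is also unsubstantiated, but nothing in the decidability argument depends on it.)
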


\begin{proof}
	Let $\S$ be a \stca. We construct an equivalent copy-send \stca $\S'$
	by letting sender processes $\p$'s send copies of their local clocks
	to receiver processes $\q$'s;
	the latter verifies at the time of reception
	whether there existed suitable initial values for channel clocks of $\S$.
	This transformation relies on the method of \emph{quantifier elimination}
	to show that the guessing of the receiver processes $\q$ can be implemented as constraints.
	
	We perform the following transformation for every channel $\pq \in \C$.
	Let classical local and channel clocks be of the form $\x_i^\p, \x_i^\q, \x_i^\pq : \Q$,
	and let fractional clocks be of the form $\y_i^\p, \y_i^\q, \y_i^\pq : \I$.	
	%
	Consider a pair of transmission (of $\p$) and reception (of $\q$) transitions
	$t^\p = (\ell^\p \goesto {\sndop \m \pq {\psi^\p}} \arr^\p)$ and
	$t^\q = (\ell^\q \goesto {\rcvop \m \qp {\psi^\q}} \arr^\q)$,
	where $\psi^\p, \psi^\q$ are of the form
	\begin{align*}
		\psi^\p \equiv
			&\bigwedge_{(i, j) \in I^\p} \x_i^\pq - \x_j^\p \sim_{ij}^\p k_{ij}^\p
			\,\wedge\!\!\!
			\bigwedge_{(i, j) \in I^\pq} \x_i^\pq - \x_j^\pq \sim_{ij}^\pq k_{ij}^\pq
			\ \wedge & \textrm{ (inequality)}
			\\
			&\bigwedge_{(i, j) \in J^\p} \x_i^\pq - \x_j^\p \eqv M h_{ij}^\p
			\,\wedge\!\!\!
			\bigwedge_{(i, j) \in J^\pq} \x_i^\pq - \x_j^\pq \eqv M h_{ij}^\pq
			\ \wedge & \textrm{ (modular)}
			\\
			&\bigwedge_{(i, j) \in K^\p} \y_i^\pq \approx_{ij}^\p \y_j^\p
			\,\wedge\!\!\!
			\bigwedge_{(i, j) \in K^\pq} \y_i^\pq \approx_{ij}^\pq \y_j^\pq, \textrm { and }
			& \textrm{ (order)}
			\\[1ex]
		\psi^\q \equiv
			&\bigwedge_{(i, j) \in I^\q} \x_i^\pq - \x_j^\q \sim_{ij}^\q k_{ij}^\q
			\,\wedge\!\!\!
			\bigwedge_{(i, j) \in J^\q} \x_i^\pq - \x_j^\q \eqv M h_{ij}^\q
			\,\wedge\!\!\!
			\bigwedge_{(i, j) \in K^\q} \y_i^\pq \approx_{ij}^\q \y_j^\q
			,
	\end{align*}
	with
	${\sim_{ij}^\p, \sim_{ij}^\pq, \sim_{ij}^\q, \approx_{ij}^\p, \approx_{ij}^\pq, \approx_{ij}^\q \,\in\! \set{<, \leq, \geq, >}}$,
	$I^\p, I^\pq, J^\p, J^\pq, K^\p, K^\pq, I^\q, J^\q, K^\q$ sets of pairs of clock indices,
	and $k_{ij}^\p, k_{ij}^\pq, h_{ij}^\p, h_{ij}^\pq, k_{ij}^\q, h_{ij}^\q \in \Z$ integer constants.
	(It suffices to consider diagonal constraints since non-diagonal ones can be simulated.
	We don't consider reception constraints on $\x_i^\pq - \x_j^\pq$
	since they are invariant under time elapse
	and can be checked directly at the time of transmission;
	thence the asymmetry between $\psi^\p$ and $\psi^\q$.)
	In the new copy-send \stca $\S'$, we have a classical channel clock $\hat \x_i^\pq : \Q$
	for every classical local clock $\x_i^\p : \Q$ of $\p$,
	and similarly a new fractional clock $\hat \y_i^\pq : \I$ for every $\y_i^\p : \I$.
	Let $\mu, \nu$ be clock valuations at the time of transmission and reception, respectively.
	The initial value of $\hat \x_i^\pq$ is $\mu(\hat \x_i^\pq) = \mu(\x_i^\p)$.
	We assume the existence of two special clocks $\x_0^\p : \Q, \y_0^\p : \I$ which are always zero upon send,
	i.e., $\mu(\x^\p_0) = \mu(\hat \x^\pq_0) = \mu(\y^\p_0) = \mu(\hat \y^\pq_0) = 0$,
	and thus when the message is received $\nu(\hat \x^\pq_0), \nu(\hat \y^\pq_0)$ equal the total integer, \resp, fractional time that elapsed
	between transmission and reception.
	%
	This allows us to recover, at reception time,
	the initial value of local clocks $\mu(\x_i^\p), \mu(\y_i^\p)$
	and the final value of channel clocks $\nu(\x_i^\pq), \nu(\y_i^\pq)$ as follows: 
	\begin{align}
		\label{eq:mu:nu:1}
		\mu(\x_i^\p) &= \nu(\hat \x_i^\pq) - \nu(\hat \x^\pq_0),
		&\nu(\x_i^\pq) &= \mu(\x_i^\pq) + \nu(\hat \x^\pq_0). 
		\\
		\label{eq:mu:nu:2}
		\mu(\y_i^\p) &= \nu(\hat \y_i^\pq) \ominus \nu(\hat \y^\pq_0),
		&\nu(\y_i^\pq) &= \mu(\y_i^\pq) \oplus \nu(\hat \y^\pq_0).
	\end{align}
	We replace transitions $t^\p, t^\q$ with
	%
		$\ell^\p \goesto {\sndop {\tuple {\m, \psi^\p, \psi^\q}} \pq {\copyc \p {}}} \arr^\p$,
		\resp,
		$\ell^\q \goesto {\rcvop {\tuple {\m, \psi^\p, \psi^\q}} \qp {\psi^\q_0}} \arr^\q$,
		where
	the original message $\m$ is replaced by $\tuple {\m, \psi^\p, \psi^\q}$
	(thus guessing and verifying the correct pair of send-receive constraints $\psi^\p, \psi^\q$),
	the send constraint is the copy constraint $\copyc \p {}$,
	and the new reception formula is $\psi^\q_0 \equiv \exists \bar \x^\pq, \bar \y^\pq \st \psi'^\p \wedge \psi'^\q$
	with $\psi'^\p, \psi'^\q$ obtained from $\psi^\p$, \resp, $\psi^\q$
	by performing the substitutions below (following \eqref{eq:mu:nu:1}, \eqref{eq:mu:nu:2}):
	\begin{align*}
		\x_i^\p \mapsto \hat \x_i^\pq - \hat \x^\pq_0,\qquad
		\x_i^\pq \mapsto \x_i^\pq + \hat \x^\pq_0,\qquad
		\y_i^\p \mapsto \hat \y_i^\pq \ominus \hat \x^\pq_0,\qquad
		\y_i^\pq \mapsto \y_i^\pq \oplus \hat \x^\pq_0.
	\end{align*}
	%
	We can rearrange the conjuncts as
	$\psi^\q_0 \equiv (\exists \bar \x^\pq \st \psi^\q_{\bar \x^\pq}) \wedge (\exists \bar \y^\pq \st \psi^\q_{\bar \y^\pq})$,
	where
	\begin{align*}
	\!\!\!\!\!\!\!\!\psi^\q_{\bar \x^\pq} \ \equiv\ &
				\!\!\!\!\bigwedge_{(i, j) \in I^\p} \!\!\! \x_i^\pq - (\hat \x_j^\pq - \hat \x^\pq_0) \sim_{ij}^\p k_{ij}^\p
				\wedge
				\!\!\!\!\!\bigwedge_{(i, j) \in I^\pq} \!\!\! \x_i^\pq - \x_j^\pq \sim_{ij}^\pq k_{ij}^\pq		
				\wedge
				\!\!\!\!\!\bigwedge_{(i, j) \in I^\q} \!\!\! (\x_i^\pq + \hat \x^\pq_0) - \x_j^\q \sim_{ij}^\q k_{ij}^\q
				\wedge \\
				&
				\!\!\!\!\!\bigwedge_{(i, j) \in J^\p} \!\!\! \x_i^\pq - (\hat \x_j^\pq - \hat \x^\pq_0) \eqv M h_{ij}^\p
				\wedge
				\!\!\!\!\!\bigwedge_{(i, j) \in J^\pq} \!\!\! \x_i^\pq - \x_j^\pq \eqv M h_{ij}^\pq		
				\wedge
				\!\!\!\!\!\bigwedge_{(i, j) \in J^\q} \!\!\! (\x_i^\pq + \hat \x^\pq_0) - \x_j^\q \eqv M h_{ij}^\q
				\\
	\!\!\!\!\!\!\!\!\psi^\q_{\bar \y^\pq} \ \equiv\ &
				\!\!\!\!\!\bigwedge_{(i, j) \in K^\p} \!\!\! \y_i^\pq \approx_{ij}^\p \hat \y_j^\pq \ominus \hat \y^\pq_0
				\wedge
				\!\!\!\!\!\bigwedge_{(i, j) \in K^\pq} \!\!\! \y_i^\pq \approx_{ij}^\pq \y_j^\pq
				\wedge
				\!\!\!\!\!\bigwedge_{(i, j) \in K^\q} \!\!\! \y_i^\pq \oplus \hat \y^\pq_0 \approx_{ij}^\q \y_j^\q.
	\end{align*}
	The formula $\psi^\q_0$ above is not a clock constraint due to the quantifiers.
	Thanks to quantifier elimination,
	we show that it is equivalent to a quantifier-free formula $\widetilde \psi^\q$,
	\ie, a constraint.
	%

	\medskip\noindent
	{\em Classical clocks.}
	We show that $\psi^\q_1 \equiv \exists \bar \x^\pq \st \psi^\q_{\bar \x^\pq}$ is equivalent to a quantifier-free formula
	$\widetilde \psi^\q_{\bar \x^\pq}$.
	%
	%
	By highlighting $\x^\pq_1$,
	we can put $\psi^\q_1$ in the form (we avoid the indices for readability)
	%
	\begin{align*}
	\psi^\q_1 \ \equiv\ \exists \bar\x^\pq \st  
		\psi' \wedge
		\bigwedge u \precsim \x^\pq_1 \wedge \bigwedge \x^\pq_1 \precsim v \wedge
		\bigwedge \x^\pq_1 \eqv M t,
	\end{align*}
	where $\psi'$ does not contain $\x^\pq_1$,
	the $u, v$'s are of one of the three types:
	$(I^\p)$ $k_{1j}^\p + \hat \x_j^\pq - \hat \x^\pq_0$,
	$(I^\pq)$ $k_{1j}^\pq + \x_j^\pq$, or
	$(I^\q)$ $k_{1j}^\q + \x_j^\q - \hat \x^\pq_0$,
	and similarly the $t$'s are of one of the three types
	$(J^\p)$ $h_{1j}^\p + \hat \x_j^\pq - \hat \x^\pq_0$,
	$(J^\pq)$ $h_{1j}^\pq + \x_j^\pq$, or
	$(J^\q)$ $h_{1j}^\q + \x_j^\q - \hat \x^\pq_0$.
	%
	%
	We can now eliminate the existential quantifier on $\x^\pq_1$ and obtain the equivalent formula
	$\psi^\q_2 \equiv \exists \x^\pq_2 \cdots \x^\pq_n \st \psi' \wedge \bigwedge u \precsim v \wedge \bigwedge t \eqv M t'$.
	Atomic formulas $u \precsim v$ in $\psi^\q_2$
	are again of the same types as above:
	If $u : (I^\p), v : (I^\pq)$, then $v - u : (I^\p)$.
	If $u, v : (I^\pq)$, then $v - u : I^\pq$.
	If $u : (I^\q), v : (I^\pq)$, then $v - u : (I^\q)$.
	In any other case, i.e.,
	if $u : (I^\p), (I^\q)$ and $v : (I^\p), (I^\q)$, then
	$u \precsim v$ is already a constraint not containing any $\x^\pq_i$'s
	($\hat x^\pq_0$ appears on both side of each inequality and we can remove it)
	and thus does not participate anymore in the quantifier elimination process.
	The same reasoning applies to the modulo constraints.
	We can thus repeat this process for the other variables $\x^\pq_2, \dots, \x^\pq_n$,
	and we finally get a \emph{constraint} equivalent to $\psi^\q_1$
	of the form $\psi^\q_n \equiv \bigwedge u \precsim v \wedge \bigwedge t \eqv M t'$,
	where the $u, v$'s are of the form $h_{1j}^\p + \hat \x_j^\pq$ or $k_{1j}^\q + \x_j^\q$,
	and similarly the $t, t'$'s are of the form $h_{1j}^\p + \hat \x_j^\pq$ or $h_{1j}^\q + \x_j^\q$.
	Thus, $\psi^\q_n$ is the constraint $\widetilde \psi^\q_{\bar \x^\pq}$ we are after.
	Notice how $\widetilde \psi^\q_{\bar \x^\pq}$ speaks only about new channel clocks $\hat \x_j^\pq$'s
	(which hold copies of $\p$-clocks $\x_j^\p$'s)
	and local $\q$-clocks $\x_j^\q$.
	
	\medskip\noindent
	{\em Fractional clocks.}
	With a similar argument we can show that $\exists \bar \y^\pq \st \psi^\q_{\bar \y^\pq}$
	is equivalent to a quantifier-free formula $\widetilde \psi^\q_{\bar \y^\pq}$;
	the details are presented in App.~\ref{app:simple}.
	To conclude, we have shown that the reception formula $\psi^\q_0$
	is equivalent to the constraint $\widetilde \psi^\q_{\bar \x^\pq} \wedge \widetilde \psi^\q_{\bar \y^\pq}$,
	as required.
\end{proof}

\mysubparagraph{Atomic channel constraints
$\hat \x^\pq = \x^\p,
\ \hat \x^\pq - \x^\q \sim k,
\ \hat \x^\pq - \x^\q \eqv M k,
\ \hat \y^\pq \sim \y^\q$.}
Thanks to the previous part, channel clocks are copies of local clocks.
As a consequence, we can assume \wlg that send and receive constraints are atomic.
Let ${\sndop \m \pq {\copyc \p {}}}$, ${\rcvop \m \qp {\psi^\q_1 \wedge \cdots \wedge \psi^\q_n}}$
be a send-receive pair, where the $\psi^\q_i$'s are atomic.
By sending $n$ times in a row the same message $\m$ as
$\sndop \m \pq {\copyc \p {}}; \dots; \sndop \m \pq {\copyc \p {}}$,
we can split the receive operation into 
$\rcvop \m \qp {\psi^\q_1}; \dots; \rcvop \m \qp {\psi^\q_n}$.
Moreover, if a receive constraints 
uses only $\hat \x^\pq$, or $\hat \y^\pq$ \resp,
then we can assume that the corresponding send constraint is just $\hat \x^\pq = \x^\p$
or, \resp, $\hat \y^\pq = \y^\p$%
---all other channel clocks are irrelevant.
Consequently, all channel constraints can in fact be assumed to be atomic.

\mysubparagraph{Atomic channel constraints
$\hat \x^\pq = 0,\ 
\hat \x^\pq = \hat \x^\q$.}
%

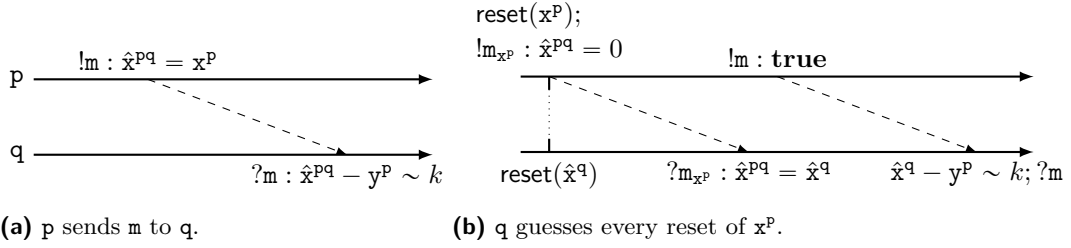
\begin{figure}
	
	\begin{minipage}[b]{.4\linewidth}
		\centering
		\begin{tikzpicture}[x=1.5cm]
			\draw[->,thick,>=latex] (0,0) node [left] {$\p$} -- (3.5,0);
			\draw[->,thick,>=latex] (0,-1) node [left] {$\q$} -- (3.5,-1);
			\draw[->,dashed,>=latex]
				(1.0,0) node [above] {$!\m : \hat \x^\pq = \x^\p$} --
				(2.75,-1) node [below] {$?\m : \hat \x^\pq - \y^\p \sim k$};
		\end{tikzpicture}
		\subcaption{$\p$ sends $\m$ to $\q$.}
		\label{fig:1a}
	\end{minipage}%
	\quad
	\begin{minipage}[b]{.5\linewidth}
		\centering
		\begin{tikzpicture}[x=1.5cm]
			\draw[->,thick,>=latex] (0,0) -- (4.5,0);
			\draw[->,thick,>=latex] (0,-1) -- (4.5,-1);
		
			\draw[thick] (0.25, 0)  node [above] {} -- ++(0, -5pt);
			\draw[thick] (0.25, -1) node [below] {$\resetop{\hat \x^\q}$} -- ++(0, 5pt);
			\draw[dotted] (0.25, 0) -- (0.25, -1);
		
			\draw[->,dashed,>=latex]
				(0.25,0) node [above] {$\begin{array}{ll}\resetop{\x^\p}; \\ !\m_{\x^\p} : \hat \x^\pq = 0 \end{array}$} --
				(2.0,-1) node [below] {$?\m_{\x^\p} : \hat \x^\pq = \hat \x^\q$};
			
			\draw[->,dashed,>=latex]
				(2.25,0) node [above] {$!\m : \true$} --
				(4.0,-1) node [below] {$\hat \x^\q - \y^\p \sim k; ?\m$};
		\end{tikzpicture}
		\subcaption{$\q$ guesses every reset of $\x^\p$.}
		\label{fig:1b}
	\end{minipage}
	\caption{Channel constraints of the form $\hat \x^\pq = 0$ (transmission) and $\hat \x^\pq = \hat \x^\q$ (reception) suffice.}
	\label{fig:zero}
\end{figure}

We further simplify atomic channel constraints
by only sending channel clocks $\hat \x^\pq$ initialised to $0$,
and having receive constraints of the form of equalities $\hat \x^\pq = \hat \x^\q$
between a channel and a local clock;
this holds for both classical and fractional clocks.
%
%
Consider a send/receive pair
{(S)~$\ell^\p \goesto {\sndop \m \pq {\hat \x^\pq = \x^\p}} \arr^\p$}
and
{(R)~$\ell^\q \goesto {\rcvop \m \qp {\psi^\q}} \arr^\q$},
where $\x^\p, \hat \x^\pq$ are either classical or fractional clocks,
and $\psi^\q$ is an atomic constraint 
of the form $\hat \x^\pq - \y^\q \sim k$ or $\hat \x^\pq - \y^\q \eqv M k$ for classical clocks,
or $\hat \x^\pq \sim \y^\q$ for fractional clocks;
cf.~Fig~\ref{fig:zero}.
Process $\p$ communicates to $\q$ every time clock $\x^\p$ is reset
by replacing every reset $\ell^\p_0 \goesto {\resetop {\x_\p}} \arr^\p_0$ with
$\ell^\p_0 \goesto {\resetop {\x^\p};\, \sndop {\m_{\x^\p}} \pq {\hat \x^\pq = 0}} \arr^p_0$
where after the reset $\p$ sends a special message $\m_{\x^\p}$ to $\q$ with initial age $0$.
We add to process $\q$ a copy $\hat \x^\q$ of every clock $\x^\p$ of $\p$;
let $\hat \X^\q$ be the set of these new clocks $\hat \x^\q$'s.
Process $\q$ guesses every reset of $\x^\p$ by resetting its corresponding local clock $\hat \x^\q$
and later verifies that the guess is correct by receiving message $\m_{\x^\p}$ with age equal to $\hat \x^\q$.
Control locations of $\q$ are now of the form $(\ell^\q, \Y)$,
where $\Y \subseteq \hat \X^\q$ is the set of new clocks $\hat \x^\q$'s
for which the reset has been correctly verified.
Initially, $\Y = \hat \X^\q$, \ie,
initially all guesses are correct since all clocks start with value $0$.
For every control location $(\ell^\q, \Y)$ of $\q$,
we have transitions
%
$(\ell^\q, \Y) \goesto {\rcvop {\m_{\x^\p}} \qp {\hat \x^\q = \hat \x^\pq}} (\ell^\q, \Y \cup \set {\hat \x^\q})$
and
%
$(\ell^\q, \Y) \goesto {\resetop {\hat \x^\q}} (\ell^\q, \Y \setminus \set {\hat \x^\q})$.
The original send transition (S) becomes
$\ell^\p \goesto {\sndop \m \pq {\true}} \arr^\p$ with the trivial timing constraint $\true$,
and the original receive transition (R)
becomes an untimed reception $(\ell^\q, \Y) \goesto{\testop{\widetilde\psi^\q}; \rcvop \m \qp {\true}} (\arr^\q, \Y)$
with $\hat \x^\q \in \Y$,
together with a test on local clocks
$\widetilde\psi^\q \equiv \hat \x^\q - \y^\q \sim k$ or, \resp, $\widetilde\psi^\q \equiv \hat \x^\q - \y^\q \eqv M k$ for classical clocks,
or $\widetilde\psi^\q \equiv \hat \x^\q \sim \y^\q$ for fractional clocks.
Constraint $\widetilde\psi^\q$ is now a test on local $\q$-clocks.

\mysubparagraph{Atomic channel constraints $\hat \x^\pq = 0,\ \hat \x^\pq \sim k, \ \hat \x^\pq \eqv M k, \ \hat \y^\pq = \hat \y^\q$.}
By a standard construction,
we can eliminate local diagonal constraints $\x^\q - \y^\q \sim k$ and $\x^\q - \y^\q \eqv M k$
for classical clocks $\x^\q, \y^\q : \Q$
in favour of their non-diagonal counterparts $\x^\q \sim k$ and $\x^\q \eqv M k$ \cite{AlurDill:NTA:TCS:1994}.
%
%
By the previous part, receive channel classical constraints are of the form $\hat \x^\pq = \hat \x^\q$,
and since now the local clock $\hat \x^\q$ participates only in non-diagonal constraints,
what only matters is that $\hat \x^\pq$ and $\hat \x^\q$ are
threshold equivalent for inequality constraints,
and modulo equivalent for modular constraints.
Two clock valuations $\mu, \nu$ are \emph{$M$-threshold equivalent},
written $\mu \approx_\M \nu$ if, for every $\x \in \X^\P$,
$\mu(\x) = \nu(\x)$ if $\mu(\x), \nu(\x) \leq M$,
and $\mu(\x) \geq M$ iff $\nu(\x) \geq M$.
Clearly, if $\mu \approx_\M \nu$, then $\mu \models \varphi$ iff $\nu \models \varphi$
for every constraint $\varphi \equiv \x \sim k$ using constants $k \leq M$.
We can check that $\x, \y$ belong to the same $M$-threshold equivalence class
with the non-diagonal inequality constraint
%
$\varphi_{\approx_\M}(\x, \y) \equiv \bigvee_{k \in \set{0, \dots, M}} \left(\x = k \wedge \y = k \ \vee \ \x \geq M \wedge \y \geq M \right)$.
%
We handle modulo constraints in the same spirit.
Two clock valuations $\mu, \nu$ are \emph{$M$-modulo equivalent},
written $\mu \eqv M \nu$ if, for every $\x \in \X^\P$,
$\mu(\x) \eqv M \nu(\x)$.
Clearly, if $\mu \eqv M \nu$, then $\mu \models \varphi$ iff $\nu \models \varphi$
for every constraint $\varphi \equiv (\x \eqv M k)$.
Moreover, we can check that $\x, \y$ belong to the same $M$-modulo equivalence class
with the non-diagonal modular constraint
%
$\varphi_{\eqv M}(\x, \y) \equiv \bigvee_{k \in \set{0, \dots, M-1}} \left( \x \eqv M k \wedge \y \eqv M k \right)$.
%
Our objective is achieved by replacing classical diagonal reception constraints $\hat \x^\pq = \hat \x^\q$
with the non-diagonal
$\varphi_{\approx_\M}(\hat \x^\pq, \hat \x^\q) \wedge \varphi_{\eqv \M}(\hat \x^\pq, \hat \x^\q)$.
Fractional constraints are untouched in this step.

\mysubparagraph{Remove classical clocks.}

We convert all constraints on classical clocks into equivalent constraints on integral and fractional clocks,
thus undoing the first step of this section.
%
%
For every classical clock $\x : \Q$, we introduce an integral $\x_\N : \N$ and a fractional clock $\x_\I : \I$
which are reset at the same moment as $\x$.
Constraints of the form $\x < k$ are replaced with $\x_\N < k$,
of the form $\x = k$ by $\x_\N = k \wedge \x_\I = 0$,
and of the form $\x > k$ by $\x_\N \geq k + 1 \vee (\x_\N \geq k \wedge \x_\I > 0)$.
It is easy to see that we obtain simple constraints,
as required.

%
%
%


\section{Desynchronised semantics}
\label{sec:desync}

We introduce an alternative run-preserving semantics for \stca, called \emph{desynchronised semantics},
which is the same as the standard semantics
except that time elapse transitions are \emph{local} within processes;
channels $\pq$'s elapse time together with receiving processes $\q$'s.
In order to guarantee that messages are received only after they are sent,
for every channel $\pq$ we allow $\q$ to be ahead of $\p$, but not the other way around.
Thanks to desynchronisation we will remove channels in the next section.
%
%
We make no assumptions on the underlying topology.

Let $\S = \tuple{\T = \tuple {\P, \C}, \M, \family \X \c \C, \family \A \p \P}$ be a \stca.
Assume that for every process $\p \in \P$ there is a special clock $\x^\p_0$
which is never reset and does not appear in any constraint.
The \emph{desynchronised semantics} is the \lts $\desem \S = \tuple {C^\textsf{de}, c_I, c_F^\textsf{de}, A, \desynchto{}}$
where everything is defined as in the standard semantics $\sem S = \tuple {C, c_I, c_F, A, \goesto{}}$,
except $C^\textsf{de}$ which is the restriction of $C$
\begin{align*}
	C^\textsf{de} = \setof
		{ \tuple {\family \ell \p \P, \mu, u} \in C }
		{ \forall \pq \in \C \st \mu(\x^\p_0) \leq \mu(\x^\q_0) }
\end{align*}
ensuring that for every channel $\pq$ process $\q$ is never behind $\p$,
the final configuration is $c_F^\textsf{de} = \tuple {\family {\ell_F} \p \P, \mu_1, (\varepsilon)_{\c \in \C}}$
where $\mu_1(\x^\p) = 0$ for every $\x \in \X \setminus \setof {\x^\p_0} {\p \in \P}$,
%
and for the desynchronised transition relation $\desynchto{}$,
which is defined as $\goesto{}$,
except for the rules for time elapse and transmissions.
For time elapse, \eqref{eq:elapse} is replaced by
%
%
\begin{gather}
	\tag{\ddag}
 	\tuple {\family \ell \p \P, \mu, \family u \c \C} \goesto \delta \tuple {\family \arr \p \P, \nu, \family v \c \C}
\end{gather}
whenever \emph{there exists} a process $\q \in \P$ \st there is a time elapse transition $\ell^\q \goesto \elapse \arr^\q$,
$\restrict \nu {\X^\q} = \restrict \mu {\X^\q} + \delta$, $v^\pq = u^\pq + \delta$ for every channel $\pq \in C$,
for every other process $\p \neq \q$, $\arr^\p = \ell^\p$,
$\restrict \nu {\X^\p} = \restrict \mu {\X^\p}$,
and $v^\c = u^\c$ for every channel $\c$ not of the form $\pq$.
For transmissions, we have the following new rule:
\vspace{-2ex}
\begin{itemize}
	\item $\op = \sndop \m \pq \psi$, $\nu = \mu$,
	there exists a valuation for clock channels $\mu^\pq \in \Qpos^{\X^\pq}$
	\st $(\mu, \mu^\pq) \models \psi$,
	$v^\pq = (\m, \mu^\pq + \delta) \cdot u^\pq$
	where we additionally increase the initial valuation $\mu^\pq$
	by the desynchronisation $\delta := \mu(\x^\q_0) - \mu(\x^\p_0) \geq 0$;
	every other channel $\c \in \C \setminus \set {\pq}$ is unchanged $v^\c = u^\c$.
\end{itemize}
Thanks to the preservation of causality between transmissions and receptions of messages,
the non-emptiness problem for $\sem S$ and $\desem S$ is the same.
\begin{restatable*}[\cf {\cite[Lemma 1]{KrcalYi:CTA:CAV:2006}, \cite[Proposition 1]{ClementeHerbreteauStainerSutre:FOSSACS13}}]{lemma}{lemDesync}
	\label{lem:desync}
	The standard semantics $\sem S$ is equivalent to the desynchronised semantics $\desem S$.
\end{restatable*}

\section{Rendezvous semantics}
\label{sec:rendezvous}

The main advantage of the desynchronised semantics introduced in the previous section is that, over polyforest topologies,
channel operations can be scheduled as too keep the channels always empty.
Moreover, doing this preserves the existence of runs.
This is formalised via the following \emph{rendezvous semantics}:
For a \stca $\S = \tuple{\T = \tuple {\P, \C}, \M, \family \X \c \C, \family \A \p \P}$
define its rendezvous semantics $\rvsem \S = \tuple {C^\textsf{rv}, c_I, c_F, A^\textsf{rv}, \rvto{}}$
to be the restriction of the desynchronised semantics
$\desem \S = \tuple {C, c_I, c_F, A, \desynchto{}}$
where channels are always empty,
%
	$C^\textsf{rv} = \setof
	{ \tuple {\family \ell \p \P, \mu, \family u \c \C} \in C }
	{ \forall \c \in \C \st u^\c = \varepsilon }$,
%
and the transition relation $\rvto{}$ is obtained from $\desynchto{}$
by replacing the two rules for send and receive by the following rendezvous transition
%
%
%
\begin{align*}
	\tuple {\family \ell \p \P, \mu, (\varepsilon)_{\c \in \C}}
		\rvto {(\op^\p, \op^\q)}
			\tuple {\family \arr \p \P, \mu, (\varepsilon)_{\c \in \C}}
\end{align*}
whenever there exists a channel $\pq \in \C$,
a matching pair of send $\ell^\p \goesto {\op^\p} \arr^\p$
and receive transitions $\ell^\q \goesto {\op^\q} \arr^\q$
with $\op^\p = \sndop \m \pq {\psi^\p}, \op^\q = \rcvop \m \qp {\psi^\q}$,
and a valuation for clock channels ${\mu^\pq \in \Qpos^{\X^\pq}}$ \st
${(\mu, \mu^\pq) \models \psi^\p}$ and
${(\mu, \mu^\pq + \delta) \models \psi^\q}$,
where, as in the desynchronised semantics,
$\delta = \mu(\x^\q_0) - \mu(\x^\p_0) \geq 0$
measures the amount of desynchronisation between $\p$ and $\q$;
for every other $\r \in \P \setminus \set{\p, \q}$,
$ \arr^\r = \ell^\r$;
the set of actions $A^\textsf{rv}$ extends $A$ accordingly.

\begin{restatable*}[\cf {\cite{HeusnerLerouxMuschollSutre:LMCS:2012}}]{lemma}{lemRendezvous}
	\label{lem:rendezvous}
	Over polyforest topologies, the desynchronised semantics $\desem S$
	is equivalent to the rendezvous semantics $\rvsem S$.
\end{restatable*}

\section{Register automata with counters}
\label{sec:register-counter}

\begin{wrapfigure}{r}{0.325\textwidth}
	\vspace{-14ex}
	\centering
	\includegraphics[scale=1]{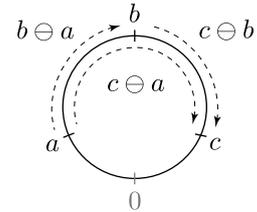}
	\caption{Cyclic order $K(a, b, c)$ \vs cyclic difference $\ominus$. The position of $0$ is irrelevant.}
	\label{fig:K}
	\vspace{-6ex}
\end{wrapfigure}

Thanks to the rendezvous semantics,
we have eliminated the channels,
at the cost of introducing a desynchronisation between senders and receivers.
The integer (unbounded) part of such desynchronisation is modelled by introducing non-negative integer counters;
the fractional part, by registers taking values in $\I = \Q \cap [0, 1)$.

\mysubparagraph{Register constraints.}

Let $\Reg$ be a finite set of \emph{registers}. 
We model fractional values for both local and channel clocks by the \emph{cyclic order} structure ${\K = (\I, K)}$,
where $K \subseteq \I^3$ is the (strict) ternary cyclic order
between rational points $a, b, c \in \I$ in the unit interval,
defined as
%
\begin{align}
	\label{eq:K}
	K(a, b, c) \ &\equiv\ a < b < c \,\vee\, b < c < a \,\vee\, c < a < b.
\end{align}
In other words, $K(a, b, c)$ holds if when moving along a circle of unit length starting from $a$, we first see $b$, and then $c$.
%
%
For $c \in \Q$, we have the relations (\cf~Fig.~\ref{fig:K})
\begin{align}
	\label{eq:ominus}
	b \ominus a \leq c \ominus a \qquad \textrm{ iff } \qquad c \ominus b \leq c \ominus a \qquad \textrm{ iff } \qquad K_0(a, b, c),
\end{align}
where $K_0(a, b, c) \equiv K(a, b, c) \vee a = b \vee b = c$.
%
%
%
%
%
%
A \emph{register constraint} is a quantifier-free formula
%
%
$\varphi$ with variables from $\Reg$ over the vocabulary of $\K$;
since $\K$ admits elimination of quantifiers \cite{survey},
we could allow arbitrary first-order formulas as register constraints without changing the expressiveness of the model.
For a constraint $\varphi$ and a register valuation $r \in \I^\Reg$,
we write $r \models \varphi$ if the formula holds when variables are interpreted according to $r$.

\mysubparagraph{Register automata with counters.}

A \emph{register automaton with counters} (\rac) is a tuple $\R = \tuple {\L, \l_I, \l_F, \Reg, \Cnt, \Delta}$
where $\L$ is a finite set of locations,
$\l_I, \l_F \in \L$ two distinguished initial and final locations therein,
$\Reg$ a finite set of \emph{registers},
$\Cnt$ a finite set of non-negative integer \emph{counters},
and $\Delta$ a finite set of rules of the form $\l \goesto \op \m$
with $\l, \m \in L$,
where $\op$ is either $\nop$,
an \emph{increment} $\incrop \n$ of counter $\n$,
an \emph{decrement} $\decrop \n$ of counter $\n$,
a \emph{counter inequality} $\testop {\n \sim k}$ 
or \emph{modular test} $\testop {\n \eqv m k}$,
a \emph{guess} $\guessop \r$ assigning a new non-deterministic value to register $\r$,
or a \emph{register test} $\testop \varphi$ with $\varphi$ a register constraint.
We allow sequences of operations $\op = (\op_1; \cdots; \op_k)$
and group updates $\incrop {\Cnt'}, \decrop {\Cnt'}$ for $\Cnt' \subseteq \Cnt$ 
as syntactic sugar.
%

\mysubparagraph{Semantics.}

The semantics of a \rac $\R$ as above is the infinite \lts
$\sem \R = \tuple {C, c_I, c_F, A, \to}$
where the set of configurations $C$ consists of tuples $\tuple {\l, n, r}$
with $\l \in \L$ a control location of $\R$,
$n \in \N^\Cnt$ a \emph{counter valuation},
and $r \in \I^\Reg$ a \emph{register valuation},
where the initial configuration is $c_I = \tuple{\l_I, \bar 0, \bar 0}$
with $\bar 0$ the initial counter and (overloaded) register valuation, 
and the final configuration is $c_F = \tuple{\l_F, \bar 0, \bar 0}$.
There is a transition
%
	$\tuple {\l, n, r} \goesto \op \tuple {\m, m, s}$
%
just in case there is a rule $\l \goesto \op \m$ \st
\begin{inparaenum}[(a)]
	\item if $\op = \nop$, then $m = n$ and $s = r$;
	\item if $\op = \incrop \n$, then $m = n[\n \mapsto n(\n)+1]$ and $s = r$;
	\item if $\op = \decrop \n$, then $n(\n) > 0$, $m = n[\n \mapsto n(\n)-1]$, and $s = r$;
	\item if $\op = \testop {\n \sim k}$, then $n(\n) \sim k$, $m = n$, and $s = r$;
	\item if $\op = \testop {\n \eqv m k}$, then $n(\n) \eqv m k$, $m = n$, and $s = r$;
	\item if $\op = \guessop \r$, then $m = n$ and there exists $x \in \I$ \st $s = r[\r \mapsto x]$;
	\item if $\op = \testop \varphi$ with $\varphi$ a register constraint, then $r \models \varphi$, $m = n$, and $s = r$.
\end{inparaenum}
A counter $\n$ appearing in some $\testop {\n \sim k}$ is said to have \emph{inequality tests}.
These can be converted to the well-known zero-tests.
Modular tests $\testop {\n \eqv m k}$ can be removed
by storing in the control location the modulo class of $\n$.
Register tests $\testop \varphi$ can be removed by bookkeeping
a symbolic description of the current register valuation called \emph{orbit}
(similarly as in the region construction for timed automata) \cite{BKL11full}.
\begin{restatable*}{theorem}{thmRacDecidable}
	\label{thm:rac:decidable}
	Non-emptiness is decidable for \rac with $\leq 1$ counter with inequality tests.
\end{restatable*}

\section{Simulating the rendezvous semantics in \rac}
\label{sec:translation}

Let $\S = \tuple{\T = \tuple {\P, \C}, \M, \family \X \c \C, \family \A \p \P}$ be a simple \stca
with $\A^\p = \tuple {L^\p, \ell^\p_I, \ell^\p_F, \X^\p, A^\p, \Delta^\p}$.
We assume that there are neither local diagonal inequality nor modular constraints%
---they can be converted to their non-diagonal counterparts with a standard construction \cite{AlurDill:NTA:TCS:1994}.
%
For every process $\p$,
let $\x^\p_0$ be a \emph{reference clock} which is never reset
representing the ``now'' instant.
We construct a \rac $\R = \tuple {\L, \l_I, \l_F, \Reg, \Cnt, \Delta}$ simulating the rendezvous semantics of $\S$.

\mysubparagraph{From clocks to registers.}

\begin{figure}
	
	\begin{minipage}[b]{.7\linewidth}
		\begin{minipage}[b]{.7\linewidth}
			\begin{tikzpicture}[x=1.5cm]
				\draw[->,thick,>=latex] (0,0) -- (3,0);
				\draw[thick] (0.25, 0) -- ++(0, +5pt) node [above] {$\resetop {\x^\p}$};
				\draw[thick] (1.25, 0) -- ++(0, +5pt) node [above] {$\resetop {\y^\p}$};
				\draw[thick] (2.50, 0) -- ++(0, +5pt) node [above] {$\testop {\x^\p \leq \y^\p}$};
			\end{tikzpicture}
			\subcaption{Clock resets and ordering tests.}
			\label{fig:clocks}
		\end{minipage}
		\\[1ex]
		\begin{minipage}[b]{.7\linewidth}
			\begin{tikzpicture}[x=1.5cm]
				\draw[->,thick,>=latex] (0,0) -- (4.5,0);
				\draw[thick] (0.25, 0) -- ++(0, +5pt) node [above] {$\hat \x^\p := \hat \x_0^\p$} ;
				\draw[thick] (1.50, 0) -- ++(0, +5pt) node [above] {$\hat \y^\p := \hat \x_0^\p$};
				\draw[thick] (3.25, 0) -- ++(0, +5pt) node [above] {$\testop {K_0(\hat \y^\p, \hat \x^\p, \hat \x_0^\p)}$};
			\end{tikzpicture}
			\subcaption{Corresponding register assignments and tests.}
			\label{fig:registers}
		\end{minipage}
	\end{minipage}
	\begin{minipage}[b]{.25\linewidth}
		\centering
		\includegraphics[scale=1]{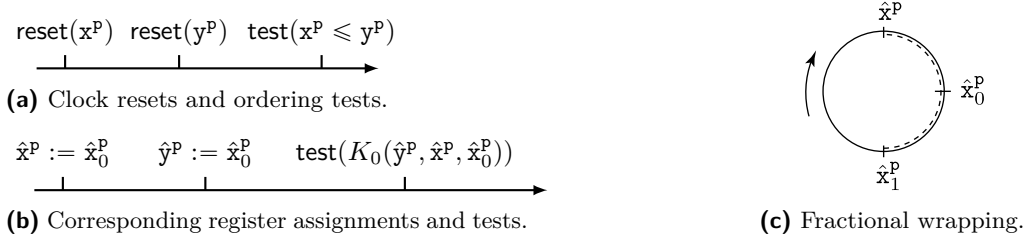}
		\subcaption{Fractional wrapping.}
		\label{fig:wrapped}
	\end{minipage}
	\caption{Fractional clocks $\x^\p, \y^\p : \I$ \vs cyclic registers $\hat \x^\p, \hat \y^\p, \hat \x_0^\p$.}
	\label{fig:clocks:registers}
\end{figure}

%
Fractional clocks $(\x^\p : \I) \in \X^\p$ 
are modelled by a corresponding register $\hat \x^\p \in \Reg$.
A \emph{reference register} $\hat \x^\p_0$ represents the fractional part of the current time of process $\p$;
an auxiliary copy $\hat \x^\p_1$ of the reference register is included to perform the simulation.
The difference between clocks and registers is that a clock $\x^\p$ stores the length of an interval of time%
---the time elapsed since the last reset of $\x^\p$---%
while the corresponding register $\hat\x^\p$ stores the timestamp $\hat \x^\p_0$ when $\x^\p$ was last reset.
In this way, we can express a fractional clock $\x^\p$ as
%
	$\x^\p = \hat \x^\p_0 \ominus \hat \x^\p$.
%
Local and channel fractional constraints
are translated as the following constraints on registers,
for $\x^\p, \y^\p, \x^\pq, \x^\q : \I$:
\begin{align}
	\label{eq:fractional:local}
	&\textrm{[local]} &
	\x^\p \leq \y^\p
	\quad &\textrm{ iff } \quad
	\hat \x^\p_0 \ominus \hat \x^\p \leq \hat \x^\p_0 \ominus \hat \y^\p
	\quad \textrm{ iff } \quad
	K_0(\hat \y^\p, \hat \x^\p, \hat \x^\p_0),
	\\
	\label{eq:fractional:send-receive}
	&\textrm{[send-receive]} &
	\x^\pq \leq \x^\q
	\quad &\textrm{ iff } \quad
	\hat\x^\q_0 \ominus \hat \x^\p_0 \leq \hat\x^\q_0 \ominus \hat\x^\q
	\quad \textrm{ iff } \quad
	K_0(\hat \x^\q, \hat \x^\p_0, \hat \x^\q_0).
\end{align}
%
Intuitively, $\x^\p \leq \y^\p$ holds iff
the last reset of $\y^\p$ happened \emph{before} that of $\x^\p$,
i.e., $K_0(\hat \y^\p, \hat \x^\p, \hat \x^\p_0)$;
\cf Fig.~\ref{fig:clocks}, \ref{fig:registers}.
For \eqref{eq:fractional:send-receive},
when $\p$ sends a message with initial age $0$,
its age at the time of reception is
$\x^\pq = \hat\x^\q_0 \ominus \hat \x^\p_0$,
i.e., the fractional desynchronisation between $\p$ and $\q$.
%

\ignore{
	\begin{align}
		&\textrm{[send]} &
		\label{eq:fractional:send}
		\x^\pq \leq \x^\p
		\quad &\textrm{ iff } \quad
		K_0(\hat \x^\p, \hat \x^\pq, \hat \x^\p_0).
	\end{align}
	where ${K_0(a, b, c) \equiv K(a, b, c) \vee a = b \vee b = c}$.
	Intuitively $\x^\p \leq \y^\p$ holds iff
	the last reset of $\y^\p$ happened \emph{before} that of $\x^\p$,
	which is what $K_0(\hat \y^\p, \hat \x^\p, \hat \x^\p_0)$ says.
	From the point of view of $\q$,
	thanks to the desynchronisation,
	channel clock $\x^\pq$ can be recovered as
	%
		$\x^\pq = (\hat \x^\p_0 \ominus \hat \x^\pq) \oplus (\hat\x^\q_0 \ominus \hat \x^\p_0) = \hat\x^\q_0 \ominus \hat \x^\pq$,
	%
	where $\hat\x^\q_0 \ominus \hat \x^\p_0$ measures the fractional desynchronisation between $\p$ and $\q$.
	Notice that $\hat\x^\p_0$ disappears.
	Thus, also channel receive constraints can be translated as constraints on registers:
	\begin{align}
		\label{eq:fractional:receive}
		&\textrm{[receive]} &
		\x^\pq \leq \x^\q
		\quad \textrm{ iff } \quad
		\hat\x^\q_0 \ominus \hat \x^\pq \leq \hat \x^\q_0 \ominus \hat \x^\q
		\quad \textrm{ iff } \quad
		K_0(\hat \x^\q, \hat \x^\pq, \hat \x^\q_0).
	\end{align}
}

%

\mysubparagraph{Unary equivalence.}
We abstract the integral value of clocks into a finite domain called \emph{unary equivalence class}
(akin to the well-known region construction for timed automata).
Let $M \in \N$ be the maximal constant used in any clock constraint of $\S$.
%
%
Two clock valuations $\mu, \nu \in \Qpos^\X$ are \emph{$M$-unary equivalent},
written $\mu \threesim_M \nu$,
if their integral values are threshold $\floor \mu \approx_M \floor \nu$
and modular equivalent $\floor \mu \eqv M \floor \nu$;
\cf Sec.~\ref{sec:simple}.
Let $\Lambda_M$ be the (finite) set of $M$-unary equivalence classes of clock valuations;
for a clock valuation $\mu \in \Qpos^\X$,
let $[\mu] \in \Lambda_M$ be its equivalence class.
%
%
For a set of clocks $\Y \subseteq \X$,
we write $\lambda[\Y \mapsto \Y + 1]$
for the unary class $[\mu']$ of valuations $\mu'$
obtained by taking some valuation $\mu \in \lambda$
and increasing it by $1$ on $\Y$.
If $\mu \threesim_M \nu$ and $\varphi$ contains only inequality and modular constraints on integral clocks
with modulus $M$ and maximal constant $M$,
then $\mu \models \varphi$ iff $\nu \models \varphi$.
We thus overload the notation and for $\lambda \in \Lambda_M$
we write $\lambda \models \varphi$ whenever there exists $\mu \in \lambda$ \st $\mu \models \varphi$.

\mysubparagraph{The translation.}

Control locations in $\L$ are pairs $\l = \tuple {\family \ell \p \P, \lambda}$ of control locations $\ell^\p$ for every $\A^\p$
and a unary equivalence class $\lambda \in \Lambda_M$ abstracting away the values of local integral clocks,
plus additional temporary locations used to perform the simulation.
The initial location is $\l_I = \tuple {\family {\ell_I} \p \P, [\bar 0]}$
and the final location is $\l_F = \tuple {\family {\ell_F} \p \P, [\bar 0]}$.
For each channel $\pq \in \C$
there is a corresponding counter $\n^\pq \in \Cnt$ measuring the amount of integral desynchronisation
between the sender process $\p$ and the receiver process $\q$;
the fractional desynchronisation is measured by $\hat \x^\q_0\ominus\hat \x^\p_0$:
%
%
\begin{align}
	\label{eq:desynch}
	\n^\pq = \floor {\x^\q_0 - \x^\p_0} \qquad \textrm { and } \qquad
	\hat \x^\q_0 \ominus \hat \x^\p_0 = \fract {\x^\q_0 - \x^\p_0}.
\end{align}
Transition rules in $\Delta$ are defined as follows.

	\myitem 1 A transition $\ell^\p \goesto \nop \arr^\p$ 
	is simulated by $\tuple {\family \ell \p \P, \lambda} \goesto \nop \tuple {\family \arr \p \P, \lambda}$ 
	with $\arr^\q = \ell^\q$ $\forall \q \neq \p$.

	\myitem 2 A local time elapse transition $t = \ell^\p \goesto \elapse \arr^\p$ in $\Delta^\p$
	is simulated as follows.
	\vspace{-2ex}
	\begin{enumerate}
		\item[\bf (2a)] We go to a temporary location $\bullet_\lambda$ implicitly depending on $t$:
		%
			$\tuple{\family \ell \p \P, \lambda} \goesto \nop \bullet_\lambda$.
	
		\item[\bf (2b)] We simulate an arbitrary integer time elapse for process $\p$.
		Let $\Cnt^+ = \setof {\n^\qp} {\qp \in \C}$
		be the set of counters corresponding to channels incoming to $\p$
		and let $\Cnt^- = \setof {\n^\pq} {\pq \in \C}$
		for outgoing channels.
		We increase counters $\n^\qp \in \Cnt^+$ by an arbitrary amount,
		and decrease counters $\n^\pq \in \Cnt^-$ by the same amount;
		the unary class $\lambda$ of clocks of $\p$ is updated accordingly:
		%
		For every $\lambda$, we have a transition
		$\bullet_\lambda \goesto {\incrop {\Cnt^+}; \decrop {\Cnt^-}} \bullet_{\lambda'}$,
		where $\lambda' = \lambda[\X^\p \mapsto \X^\p + 1]$.
		These transitions can be repeated an arbitrary number of times.
		
		\item[\bf (2c)] We save the current local time of $\p$ in $\hat \x^\p_1$:
		$\bullet_\lambda \goesto {\guessop {\hat x^\p_1}; \testop {\hat \x^\p_1 = \hat \x^\p_0}} \bullet_{\lambda}^1$.
		
		\item[\bf (2d)] We simulate an arbitrary fractional time elapse for process $\p$
		by guessing a new arbitrary value for the local reference register $\hat \x^\p_0$:
		%
		$\bullet_\lambda^1 \goesto {\guessop {\hat x^\p_0}} \bullet_\lambda^2$.
			
		\item[\bf (2e)] We need to further increase by one
		the integral part of clocks $\x^\p$ whose fractional value was wrapped around $0$ one time more
		than the fractional part of the reference clock $\x^\p_0$.
		For registers $\hat\x, \hat\y, \hat\z$, let
		\begin{align}
			\label{eq:K1-K2}
			K_1(\hat\x, \hat\y, \hat\z) \ \equiv\ K(\hat\x, \hat\y, \hat\z) \vee \hat\y = \hat\z \quad \textrm { and } \quad
			K_2(\hat\x, \hat\y, \hat\z) \ \equiv\ K(\hat\x, \hat\y, \hat\z) \vee \hat\x = \hat\y \neq \z.
		\end{align}
		Cf.~Fig.~\ref{fig:wrapped}:
		Register $\hat \x^\p_1$ stores the old one fractional time.
		%
		%
		In the dashed interval ($\hat \x^\p$ included, $\hat \x^\p_1$ excluded)
		the fractional part of clock $\x^\p$ was wrapped around $0$ one more time than $\x^\p_0$.
		This is the case precisely when $K_1(\hat \x^\p_1, \hat \x^\p, \hat \x^\p_0)$ holds.
		%
		%
		The same adjustment is made for incoming channels $\qp$,
		where $\n^\qp$ must be increased by one	whenever
		$K_1(\hat \x^\p_1, \hat \x^\q_0, \hat \x^\p_0)$	holds.
		For outgoing channels $\pq$, counter $\n^\pq$ must be further decreased by one precisely when
		$K_2(\hat \x^\p_1, \hat \x^\q_0, \hat \x^\p_0)$	holds.
		Let $\SS = \SS^+ \cup \SS^-$, where
		$\SS^+ =	\setof {\hat \x^\p} {\x^\p \in \X^\p} \cup
					\setof {\hat \x^\q_0} {\qp \in \C}$ and
		$\SS^- = 	\setof {\hat \x^\q_0} {\pq \in \C}$,
		be the set of registers that must be checked.
		The automaton guesses a partition $\SS = \SSyes \cup \SSno$
		of those registers corresponding to wrapped clocks and all the others.
		The guess is verified with the formula
		\begin{align}
			\label{eq:cond:phi}
			\varphi\ \equiv\ 
			&\forall \hat\x \in \SSyes \cap \SS^+ \st K_1(\hat \x^\p_1, \hat \x, \hat \x^\p_0) \wedge \forall \hat\x \in \SSyes \cap \SS^- \st K_2(\hat \x^\p_1, \hat \x, \hat \x^\p_0) \wedge \\
			\nonumber
			&\forall \hat\x \in \SSno \cap \SS^+ \st \neg K_1(\hat \x^\p_1, \hat \x, \hat \x^\p_0) \wedge
			\forall \hat\x \in \SSno \cap \SS^- \st \neg K_2(\hat \x^\p_1, \hat \x, \hat \x^\p_0).
		\end{align}
		Let $\Xyes = \setof {\x^\p \in \X^\p} {\hat \x^\p \in \SSyes}$
		be the set of $\p$-clocks whose fractional values were wrapped around $0$.
		The unary class for clocks in $\Xyes$ is updated by $\lambda' = \lambda[\Xyes \mapsto \Xyes + 1]$.
		Let $\Cntyes^+ = \setof {\n^\qp \in \Cnt} {\hat \x^\q_0 \in \SSyes \cap \SS^+}$
		be the set of counters that need to be increased,
		and let $\Cntyes^- = \setof {\n^\pq \in \Cnt} {\hat \x^\q_0 \in \SSyes \cap \SS^-}$
		those that need to be decreased.
		For every $\lambda$ and guessing as above, we have a transition
		$\bullet_\lambda^2 \goesto {\testop{\varphi}; \incrop{(\Cntyes^+)}; \decrop{(\Cntyes^-)}} \bullet_{\lambda'}^3$.
			
		\item[\bf (2f)] The simulation of time elapse terminates with a transition
		$\bullet_{\lambda}^3 \goesto \nop \tuple{\family \arr \p \P, \lambda}$
		for every $\lambda$,
		where $\arr^\q = \ell^\q$ for every other process $\q \neq \p$.	
	\end{enumerate}

	\myitem 3 A test operation $\ell^\p \goesto {\testop \varphi} \arr^\p$ in $\Delta^\p$,
	is simulated by a corresponding transition in $\Delta$
	${\tuple {\family \ell \p \P, \lambda} \goesto \op \tuple {\family \arr \p \P, \lambda}}$.
	An inequality $\varphi \equiv \x^\p \sim k$ or a modular $\varphi \equiv \x^\p \eqv m k$ constraint
	is immediately checked by requiring $\lambda \models \varphi$ and $\op = \nop$.
	Here we use the fact that there are no diagonal inequality or modular constraints in $\S$.
	A fractional constraint $\varphi \equiv \x^\p \leq \y^\p$ on fractional clocks $\x^\p, \y^\p : \I$
	is replaced by the corresponding constraint on fractional registers
	$\op = \testop {K_0(\hat \y^\p, \hat \x^\p, \hat \x^\p_0)}$; \cf~\eqref{eq:fractional:local}.
	For every other $\q \neq \p$, $\arr^\q = \ell^\q$.
	
	\myitem 4 A reset operation $\ell^\p \goesto {\resetop {\x^\p}} \arr^\p$ in $\Delta^\p$
	with $\x^\p : \N$ an integral clock
	is simulated by updating the unary class with the transition
	$\tuple {\family \ell \p \P, \lambda} \goesto \nop \tuple {\family \arr \p \P, \lambda[\x^\p \mapsto [0]]}$ in $\Delta$.
	On the other hand, if $\x^\p : \I$ is a fractional clock,
	then the corresponding register $\hat\x^\p$ records the current timestamp $\hat\x^\p_0$
	by executing
	$\tuple {\family \ell \p \P, \lambda} \goesto {\guessop {\hat\x^\p}; \testop {\hat\x^\p = \hat\x^\p_0}} \tuple {\family \arr \p \P, \lambda[\x^\p \mapsto 0]}$
	in $\Delta$.
	For every other $\q \neq \p$, $\arr^\q = \ell^\q$.

	\myitem 5 A send-receive pair
	$\ell^\p \goesto {\sndop \m \pq {\psi^\p}} \arr^\p$ in $\Delta^\p$
	and $\ell^\q \goesto {\rcvop \m \pq {\psi^\q}} \arr^\q$ in $\Delta^\q$
	is simulated by a test transition
	${\tuple {\family \ell \p \P, \lambda} \goesto {\testop \varphi} \tuple {\family \arr \p \P, \lambda}}$ in $\Delta$,
	where $\arr^\r = \ell^\r$ for every other $\r \in \P \setminus \set {\p, \q}$,
	provided that one of the following conditions hold:
	%
	\begin{enumerate}
	
	\ignore{
		\item If it is a classical send-receive pair,
		then, since our \stca is simple, 
		$\psi^\p, \psi^\q$ are classical (in)equality constraints
		$\psi^\p \equiv \x^\pq = 0$ and $\psi^\q \equiv \x^\pq \sim k$.
		Since the counter $\n^\pq$ measures the integral desynchronisation between $\p$ and $\q$,
		it also measures the integral value of the final age $\x^\pq$ of message $\m$ at the time of reception;
		on the other hand, the fractional value is $\hat\x^\q_0 \ominus \hat\x^\p_0$.
		We distinguish three cases,
		depending on the inequality $\sim$.
		\begin{enumerate}
			\item If $\psi^\q \equiv \x^\pq = k$,
			then we take $\op = \testop {\n^\pq = k \wedge \hat\x^\p_0 = \hat\x^\q_0}$.
			
			\item If $\psi^\q \equiv \x^\pq > k$,
			then we take $\op = \testop {\n^\pq > k \vee (\n^\pq = k \wedge \hat\x^\p_0 \neq \hat\x^\q_0)}$.
			
			\item If $\psi^\q \equiv \x^\pq < k$,
			then we take $\op = \testop {\n^\pq \leq k - 1}$.
		\end{enumerate}
		}
		
		\item[\bf (5a)] If it is an integral send-receive pair,
		then, since our \stca is simple, 
		$\psi^\p, \psi^\q$ are (in)equality constraints of the form
		$\psi^\p \equiv \x^\pq = 0$ and $\psi^\q \equiv \x^\pq \sim k$
		with $\x^\pq : \N$ an integral clock.
		Since the counter $\n^\pq$ measures the integral desynchronisation between $\p$ and $\q$,
		it also measures final value of $\x^\pq$ at the time of reception.
		We take $\varphi \equiv \n^\pq \sim k$.
		
		\item[\bf (5b)] If it is a modular send-receive pair,
		then, since our \stca is simple,
		${\psi^\p \equiv \x^\pq = 0}$ and
		${\psi^\q \equiv (\x^\pq \eqv M k)}$ with $\x^\pq : \N$ an integral clock.
		Take $\varphi \equiv \n^\pq \eqv M k$.
	
		\item[\bf (5c)] The last case is a fractional send-receive pair.
		Since our \stca is simple, we can assume constraints are of the form
		${\psi^\p \equiv \x^\pq = 0}$ and
		${\psi^\q \equiv \x^\pq \leq \x^\q}$ (the other inequality can be treated similarly)
		for fractional clocks $\x^\pq, \x^\q : \I$.
		%
		By~\eqref{eq:fractional:send-receive},
		take $\varphi \equiv K_0(\hat \x^\q, \hat\x^\p_0, \hat\x^\q_0)$.
		%
		%
	\end{enumerate}
%
This concludes the description of the \rac $\R$.

\begin{restatable*}{lemma}{lemTranslation}
	\label{lem:translation}
	The rendezvous semantics $\rvsem \S$ and $\sem \R$ are equivalent.
\end{restatable*}

\begin{proof}[Proof idea]
	We show that the rendezvous semantics of the \stca $\S$
	and the semantics of the \rac $\R$
	are related by a variant of weak bisimulation \cite{Milner:communication:1989}.
	For a configuration $c \in \rvsem \S$ of the form
	$c = \tuple {\family \ell \p \P, \mu}$
	(we ignore the contents of the channels because they are always empty by the definition of rendezvous semantics)
	%
	%
	and a configuration $d \in \sem \R$ of the form
	$d = \tuple {\tuple {\family {\ell'} \p \P, \lambda}, n, r},$
	%
	%
	we say that they are \emph{equivalent},
	written $c \approx d$, if \\[1ex]
	\myitem 1 Control locations are the same: $\ell^\p = \ell'^\p$ for every $\p \in \P$.
	\vspace{-2ex}
		\begin{align}
			\interitem 2 {\vspace{-2ex} The abstraction $\lambda$ is the unary class of the local clock valuation:}
			\label{eq:unary}
			\lambda(\x^\p) &= [\mu(\x^\p)], && \textrm { for every clock } \x^\p \in \X. \\[-1ex]
			\interitem 3 {Register $\hat \x^\p$ keeps track of the fractional part of clock $\x^\p$:}
			\label{eq:registers}
			r(\hat \x^\p_0) \ominus r(\hat \x^\p) &= \fract{\mu(\x^\p)}, && \textrm { for every clock } \x^\p \in \X.
			\interitem 4 {Counter $\n^\pq$ measures the integral desynchronisation between $\p$ and $\q$; \cf\eqref{eq:desynch}:}
			\label{eq:desynch:integral}
			n(\n^\pq) &= \floor {\mu(\x^\q_0) - \mu(\x^\p_0)}, && \textrm { for every channel } \pq \in \C.
			\interitem 5 {
			The fractional desynchronisation between $\p$ and $\q$ is expressed as:}
			\label{eq:desynch:fractional}
			r(\hat \x^\q_0) \ominus r(\hat \x^\p_0) &= \fract {\mu(\x^\q_0) - \mu(\x^\p_0)}, && \textrm { for every channel } \pq \in \C.
		\end{align}
	%
	We show in Sec.~\ref{app:proof} that $c \approx d$ implies that
	the two configurations $c, d$ have the same set of runs starting therein.
	Since the two initial configurations
	are equivalent $c_I \approx d_I$,
	it follows that $\rvsem \S$ is non-empty iff $\sem \R$ is non-empty, as required.
\end{proof}
To sum up, we have so far reduced the non-emptiness problem of a \stca to a simple one (Sec.~\ref{sec:simple}),
then to its rendezvous semantics (Sec.~\ref{sec:desync} and \ref{sec:rendezvous}),
and in this section the latter is reduced to non-emptiness of \rac.
In order to conclude by Theorem~\ref{thm:rac:decidable},
we have to show that, if the communication topology has at most one channel with inequality tests per polytree,
then a \rac with at most one inequality test suffices.
We apply the translation of this section to each polytree (thus obtaining several \rac{s} with at most one inequality test each),
and then simulate the whole polyforest topology by sequentialising each polytree,
which allows to reuse a single inequality test for the entire simulation; \cf Sec.~\ref{app:main} for the details.
To conclude, are able to produce a single \rac with at most one inequality test equivalent to the original \stca.
This finishes the proof of the ``if'' direction of Theorem~\ref{thm:main}.




\bibliographystyle{abbrv}
\bibliography{bib}

\begin{thebibliography}{10}

\bibitem{AbdullaMahataMayr:DTPNs:LMCS2007}
P.~Abdulla, P.~Mahata, and R.~Mayr.
\newblock Dense-timed petri nets: Checking zenoness, token liveness and
  boundedness.
\newblock {\em Logical Methods in Computer Science}, 3(1), 2007.

\bibitem{AbdullaAiswaryaAtig:LICS:2016}
P.~A. Abdulla, C.~Aiswarya, and M.~F. Atig.
\newblock Data communicating processes with unreliable channels.
\newblock In {\em In Proc. of LICS'16}, pages 166--175, New York, NY, USA,
  2016. ACM.

\bibitem{AbdullaAtigCederberg:TLCS:FSTTCS12}
P.~A. Abdulla, M.~F. Atig, and J.~Cederberg.
\newblock Timed lossy channel systems.
\newblock In D.~D'Souza, T.~Kavitha, and J.~Radhakrishnan, editors, {\em Proc.
  of FSTTCS'12}, volume~18 of {\em LIPIcs}, pages 374--386, 2012.

\bibitem{AbdullaAtigStenman:DensePDA:12}
P.~A. Abdulla, M.~F. Atig, and J.~Stenman.
\newblock Dense-timed pushdown automata.
\newblock In {\em Proc. of LICS'12}, pages 35--44, June 2012.

\bibitem{AbdullaJonsson:LCS:IC:1996}
P.~A. Abdulla and B.~Jonsson.
\newblock Verifying programs with unreliable channels.
\newblock {\em Information and Computation}, 127(2):91--101, 1996.

\bibitem{AkshayGastinKrishnaSarkar:CONCUR:2017}
S.~Akshay, P.~Gastin, S.~N. Krishna, and I.~Sarkar.
\newblock {Towards an Efficient Tree Automata Based Technique for Timed
  Systems}.
\newblock In R.~Meyer and U.~Nestmann, editors, {\em In Proc. of CONCUR'17},
  volume~85 of {\em LIPIcs}, pages 39:1--39:15, 2017.

\bibitem{AlurDill:ICALP:1990}
R.~Alur and D.~Dill.
\newblock Automata for modeling real-time systems.
\newblock In {\em Proc. of ICALP'90}, volume 443 of {\em LNCS}, pages 322--335,
  1990.

\bibitem{AlurDill:NTA:TCS:1994}
R.~Alur and D.~L. Dill.
\newblock A theory of timed automata.
\newblock {\em Theor. Comput. Sci.}, 126:183--235, April 1994.

\bibitem{AbdullaAtigKrishna:Arxiv:2017}
P.~{Aziz Abdulla}, M.~{Faouzi Atig}, and S.~{Krishna}.
\newblock {Communicating Timed Processes with Perfect Timed Channels}.
\newblock {\em ArXiv e-prints}, Aug. 2017.

\bibitem{BenerecettiMinopoliPeron:RTA:2010}
M.~Benerecetti, S.~Minopoli, and A.~Peron.
\newblock Analysis of timed recursive state machines.
\newblock In {\em Proc. {TIME}'10}, pages 61--68. IEEE, sept. 2010.

\bibitem{TimedPetriNets:05}
B.~B{\'e}rard, F.~Cassez, S.~Haddad, D.~Lime, and O.~Roux.
\newblock Comparison of different semantics for time petri nets.
\newblock In D.~Peled and Y.-K. Tsay, editors, {\em In Proc. of ATVA'05},
  volume 3707 of {\em LNCS}, pages 293--307. Springer, 2005.

\bibitem{BKL11full}
M.~Boja{\'n}czyk, B.~Klin, and S.~Lasota.
\newblock Automata theory in nominal sets.
\newblock {\em Log. Meth. Comput. Sci.}, 10(3:4), 2013.

\bibitem{Bonnet:MFCS:2011}
R.~Bonnet.
\newblock The reachability problem for vector addition system with one
  zero-test.
\newblock In {\em In Proc. of MFCS'11}, pages 145--157. Springer, 2011.

\bibitem{bouajjani:timed:PDA:94}
A.~Bouajjani, R.~Echahed, and R.~Robbana.
\newblock On the automatic verification of systems with continuous variables
  and unbounded discrete data structures.
\newblock In {\em Hybrid Systems'94}, pages 64--85, 1994.

\bibitem{BrandZafiropulo:CFM:ACM:1983}
D.~Brand and P.~Zafiropulo.
\newblock On communicating finite-state machines.
\newblock {\em J. ACM}, 30(2):323--342, Apr. 1983.

\bibitem{CeceFinkel:IC:2005}
G.~C{\'e}c{\'e} and A.~Finkel.
\newblock Verification of programs with half-duplex communication.
\newblock {\em Information and Computation}, 202(2):166--190, 2005.

\bibitem{CeceFinkel:IC:1996}
G.~Cece, A.~Finkel, and S.~P. Iyer.
\newblock Unreliable channels are easier to verify than perfect channels.
\newblock {\em Information and Computation}, 124(1):20--31, 1996.

\bibitem{ChampartSchonebelen:CONCUR:08}
P.~Chambart and P.~Schnoebelen.
\newblock Mixing lossy and perfect fifo channels.
\newblock In F.~van Breugel and M.~Chechik, editors, {\em Proc. of CONCUR'08},
  volume 5201 of {\em LNCS}, pages 340--355. Springer, 2008.

\bibitem{ClementeHerbreteauStainerSutre:FOSSACS13}
L.~Clemente, F.~Herbreteau, A.~Stainer, and G.~Sutre.
\newblock Reachability of communicating timed processes.
\newblock In F.~Pfenning, editor, {\em Proc. of FOSSACS'13}, volume 7794 of
  {\em LNCS}, pages 810--96. Springer, 2013.

\bibitem{ClementeHerbreteauSutre:CONCUR:2014}
L.~Clemente, F.~Herbreteau, and G.~Sutre.
\newblock Decidable topologies for communicating automata with fifo and bag
  channels.
\newblock In P.~Baldan and D.~Gorla, editors, {\em Proc. of CONCUR'14}, volume
  8704 of {\em LNCS}, pages 281--296. Springer, 2014.

\bibitem{ClementeLasota:ICALP:2018}
L.~Clemente and S.~Lasota.
\newblock Binary reachability of timed pushdown automata via quantifier
  elimination.
\newblock To appear in ICALP'18.

\bibitem{ClementeLasota:LICS:2015}
L.~Clemente and S.~Lasota.
\newblock Timed pushdown automata revisited.
\newblock In {\em Proc. {LICS}'15}, pages 738--749. IEEE, July 2015.

\bibitem{ClementeLasotaLazicMazowiecki:LICS:2017}
L.~Clemente, S.~Lasota, R.~Lazi{\'c}, and F.~Mazowiecki.
\newblock Timed pushdown automata and branching vector addition systems.
\newblock In {\em Proc. of {LICS}'17}, 2017.

\bibitem{Dang:PTA:2003}
Z.~Dang.
\newblock Pushdown timed automata: a binary reachability characterization and
  safety verification.
\newblock {\em Theor. Comput. Sci.}, 302(1-3):93--121, June 2003.

\bibitem{HaaseSchmitzSchoebelen:LMCS:2014}
C.~Haase, S.~Schmitz, and P.~Schnoebelen.
\newblock The power of priority channel systems.
\newblock {\em Logical Methods in Computer Science}, 10(4), 2014.

\bibitem{HerbreteauSrivathsanWalukiewicz:IC:2016}
F.~Herbreteau, B.~Srivathsan, and I.~Walukiewicz.
\newblock Better abstractions for timed automata.
\newblock {\em Information and Computation}, 251:67--90, 2016.

\bibitem{HeussnerLeGallSutre:2012:FSTTCS}
A.~Heu{\ss}ner, T.~Le~Gall, and G.~Sutre.
\newblock Safety verification of communicating one-counter machines.
\newblock In D.~D'Souza, T.~Kavitha, and J.~Radhakrishnan, editors, {\em Proc.
  of FSTTCS'12}, volume~18 of {\em LIPIcs}, pages 224--235, 2012.

\bibitem{HeusnerLerouxMuschollSutre:LMCS:2012}
A.~Heu{\ss}ner, J.~Leroux, A.~Muscholl, and G.~Sutre.
\newblock Reachability analysis of communicating pushdown systems.
\newblock {\em LMCS}, 8(3):1--20, September 2012.

\bibitem{Kosaraju:VASS:STOC:1982}
S.~R. Kosaraju.
\newblock Decidability of reachability in vector addition systems (preliminary
  version).
\newblock In {\em Proc of. STOC'82}, pages 267--281. ACM, 1982.

\bibitem{KrcalYi:CTA:CAV:2006}
P.~Krcal and W.~Yi.
\newblock Communicating timed automata: the more synchronous, the more
  difficult to verify.
\newblock In {\em Proc. of CAV'06}, LNCS, pages 249--262. Springer, 2006.

\bibitem{Lambert:TCS:1992}
J.~L. Lambert.
\newblock A structure to decide reachability in petri nets.
\newblock {\em Theor. Comput. Sci.}, 99(1):79--104, June 1992.

\bibitem{LerouxSchmitz:LICS:2015}
J.~Leroux and S.~Schmitz.
\newblock Demystifying reachability in vector addition systems.
\newblock In {\em In Proc. of LICS'15}, pages 56--67, July 2015.

\bibitem{Lipton:1976}
R.~J. Lipton.
\newblock {\em The Reachability Problem Requires Exponential Space}.
\newblock Research report. Department of Computer Science, Yale University,
  1976.

\bibitem{survey}
D.~Macpherson.
\newblock A survey of homogeneous structures.
\newblock {\em Discrete Mathematics}, 311(15):1599–1634, 2011.

\bibitem{Mayr:PN:Reach:STOC:1981}
E.~W. Mayr.
\newblock {An algorithm for the general Petri net reachability problem}.
\newblock In {\em Proc. of STOC'81}, pages 238--246. ACM, 1981.

\bibitem{Milner:communication:1989}
R.~Milner.
\newblock {\em Communication and Concurrency}.
\newblock Prentice Hall, 1989.

\bibitem{Pachl:CA:1982}
J.~Pachl.
\newblock Reachability problems for communicating finite state machines.
\newblock Technical Report CS-82-12, University of Waterloo, May 1982.

\bibitem{Quaas:LMCS:2015}
K.~Quaas.
\newblock {Verification for Timed Automata extended with Unbounded Discrete
  Data Structures}.
\newblock {\em {Logical Methods in Computer Science}}, {Volume 11, Issue 3},
  Sept. 2015.

\bibitem{QuaasShirmohammadiWorrell:LICS:2017}
K.~Quaas, M.~Shirmohammadi, and J.~Worrell.
\newblock Revisiting reachability in timed automata.
\newblock In {\em Proc. of LICS'17}, pages 1--12, June 2017.

\bibitem{Reinhardt:TCS:2008}
K.~Reinhardt.
\newblock Reachability in petri nets with inhibitor arcs.
\newblock {\em Electron. Notes Theor. Comput. Sci.}, 223:239--264, Dec. 2008.

\bibitem{TorreMadhusudanParlato:TACAS:2008}
S.~L. Torre, P.~Madhusudan, and G.~Parlato.
\newblock Context-bounded analysis of concurrent queue systems.
\newblock In {\em Proc. of TACAS'08}, volume 4963 of {\em LNCS}, pages
  299--314, 2008.

\bibitem{TrivediWojtczak:RTA:2010}
A.~Trivedi and D.~Wojtczak.
\newblock Recursive timed automata.
\newblock In {\em Proc. {ATVA}'10}, volume 6252 of {\em LNCS}, pages 306--324.
  Springer, 2010.

\bibitem{WaezDingelRudie:CSR:2013}
M.~T.~B. Waez, J.~Dingel, and K.~Rudie.
\newblock A survey of timed automata for the development of real-time systems.
\newblock {\em Computer Science Review}, 9:1--26, 2013.

\end{thebibliography}

\newpage
\appendix


\section{Appendix}
\label{app}

Let $\condone C$, for a condition $C$, be $1$ if $C$ holds, and $0$ otherwise.

\subsection{Missing proof for Sec.~\ref{sec:simple}}
\label{app:simple}

We conclude the proof of Lemma~\ref{lemma:copy-send} in the case of fractional clocks.
\begin{proof}[Second part of the proof of Lemma~\ref{lemma:copy-send}]
	\medskip\noindent
	{\em Fractional clocks.}
	Recall the definition of $\psi^\q_{\bar \y^\pq}$:
	\begin{align*}
		\psi^\q_{\bar \y^\pq} \ \equiv\ &
				\bigwedge_{(i, j) \in K^\p} \y_i^\pq \approx_{ij}^\p \hat \y_j^\pq \ominus \hat \y^\pq_0
				\wedge
				\bigwedge_{(i, j) \in K^\pq} \y_i^\pq \approx_{ij}^\pq \y_j^\pq
				\wedge
				\bigwedge_{(i, j) \in K^\q} \y_i^\pq \oplus \hat \y^\pq_0 \approx_{ij}^\q \y_j^\q.
	\end{align*}
	We show that $\exists \bar \y^\pq \st \psi^\q_{\bar \y^\pq}$ is equivalent to a quantifier-free formula
	$\widetilde \psi^\q_{\bar \y^\pq}$.
	We replace the rightmost atomic formula $\y_i^\pq \oplus \hat \y^\pq_0 \leq \y_j^\q$ in $\psi^\q_{\bar \y^\pq}$
	by an equivalent formula using ``$\ominus$'' instead of ``$\oplus$'';
	the other comparison operators can be dealt with in a similar manner.
	We would like to apply ``$\ominus \hat \y^\pq_0$'' to both sides of the inequality,
	using the obvious fact that $(\y_i^\pq \oplus \hat \y^\pq_0) \ominus \hat \y^\pq_0 = \y_i^\pq$.
	This is safe to do if $\hat \y^\pq_0 \leq \y_i^\pq \oplus \hat \y^\pq_0$ (and thus $\hat \y^\pq_0 \leq \y_j^\q$),
	which is equivalent to $\hat \y^\pq_0 \leq 1 \ominus \y_i^\pq$,
	and we obtain $\y_i^\pq \leq \y_j^\q \ominus \hat \y^\pq_0$ in this case.
	However, if $\y_i^\pq \oplus \hat \y^\pq_0 < \hat \y^\pq_0 \leq \y_j^\q$,
	that is, $1 \ominus \y_i^\pq < \hat \y^\pq_0 \leq \y_j^\q$,
	then the inequality is inverted,
	and we obtain $\y_j^\q \ominus \hat \y^\pq_0 < \y_i^\pq$ in this case.
	Finally, if $\y_j^\q < \hat \y^\pq_0$ (and thus $\y_i^\pq \oplus \hat \y^\pq_0 < \hat \y^\pq_0$),
	then the inequality flips again,
	and we obtain again $\y_i^\pq \leq \y_j^\q \ominus \hat \y^\pq_0$.
	Putting these three cases together,
	we have that $\y_i^\pq \oplus \hat \y^\pq_0 \leq \y_j^\q$ is equivalent to the formula
	\begin{align*}
		&(\y_i^\pq \leq \y_j^\q \ominus \y^\pq_0 \wedge (\hat \y^\pq_0 \leq 1 \ominus \y_i^\pq \vee \y_j^\q < \hat \y^\pq_0)) \vee
		(\y_j^\q \ominus \hat \y^\pq_0 < \y_i^\pq \wedge 1 \ominus \y_i^\pq < \hat \y^\pq_0 \leq \y_j^\q).
	\end{align*}
	We put the $\y_i^\pq$'s in positive positions obtaining the equivalent formula
	\begin{align*}
		&(\y_i^\pq \leq \y_j^\q \ominus \y^\pq_0 \wedge (\y_i^\pq \leq 1 \ominus \hat \y^\pq_0 \vee \y_j^\q < \hat \y^\pq_0)) \vee
		(\y_j^\q \ominus \hat \y^\pq_0 < \y_i^\pq \wedge 1 \ominus \hat \y^\pq_0 < \y_i^\pq \wedge \hat \y^\pq_0 \leq \y_j^\q).
	\end{align*}
	\ignore{
		\begin{gather*}
			(\y_i^\pq \leq \y_j^\q \ominus \hat \y^\pq_0 \wedge
				(\y_i^\pq < 1 \ominus \hat \y^\pq_0 \leq 1 \ominus \y_j^\q \vee
				1 \ominus \y_j^\q < 1 \ominus \hat \y^\pq_0 \leq \y_i^\pq)) \vee \\
			1 \ominus \hat \y^\pq_0 \leq 1 \ominus \y_j^\q \wedge 1 \ominus \hat \y^\pq_0 \leq \y_i^\pq.
		\end{gather*}
	}
	By distributing $\vee$ over $\wedge$,
	we can put $\psi^\q_{\bar \y^\pq}$ in CNF.
	\Wlg it suffices consider a single conjunct thereof,
	which has the general shape (we omit indices for readability)
	\begin{align}
		\label{eq:fract:intermediate}
		\psi \wedge \exists \bar \y^\pq \st \bigwedge u \preceq v,
	\end{align}
	where $\psi$ contains only constraints of the form 
	$\y_j^\q \approx \hat \y^\pq_0$ with $\approx \in \set{<, \leq, \geq, >}$;
	$\preceq \,\in \set {\leq, <}$;
	and the lower $u$'s and upper bound constraints $v$'s are of one the forms
	$\y_i^\pq$, $\hat \y_j^\pq \ominus \hat \y^\pq_0$, $\y_j^\q \ominus \hat \y^\pq_0$, or $1 \ominus \hat \y^\pq_0$.
	By solving \eqref{eq:fract:intermediate} \wrt $\y_1^\pq$,
	we obtain a formula of the form
	\begin{align*}
		\psi \wedge
			\exists	\y_2^\pq \cdots \y_n^\pq \st \varphi \wedge
			\exists \y_1^\pq \st \bigwedge u \preceq \y_1^\pq \wedge \bigwedge \y_1^\pq \preceq v,
	\end{align*}
	where $\psi$ is as in \eqref{eq:fract:intermediate}
	and	$\varphi$ does not contain $\y_1^\pq$.
	%
	%
	%
	By removing the existential quantifier on $\y_1^\pq$ we obtain
	\begin{align*}
		\psi \wedge \exists \y_2^\pq \cdots \y_n^\pq \st \varphi \wedge \bigwedge u \preceq v.
	\end{align*}
	This formula is in the same form as \eqref{eq:fract:intermediate}, but with one quantifier less.
	We can repeat the process and remove all the quantifiers \wrt $\y_2^\pq \dots \y_n^\pq$,
	and obtain a quantifier-free formula of the form $\psi' \wedge \bigwedge u' \preceq v'$
	where $\psi'$ contains only constraints of the form $\y_j^\q \approx \hat \y^\pq_0$ with $\approx \in \set{<, \leq, \geq, >}$,
	and the $u', v'$'s are of one of the forms
	$\hat\y_j^\pq \ominus \hat \y^\pq_0$,
	$\y_j^\q \ominus \hat \y^\pq_0$,
	or $1 \ominus \hat\y^\pq_0$.
	Thus every $u' \preceq v'$ is of the form $a \ominus \hat \y^\pq_0 \preceq b \ominus \hat \y^\pq_0$,
	and by~\eqref{eq:ominus} it can be expressed purely in terms of order constraints on $a, b$.
	We have thus obtained the quantifier-free formula $\widetilde \psi^\q_{\bar \y^\pq}$ we were after.
	Notice that $\widetilde \psi^\q_{\bar \y^\pq}$ speaks only about local $\q$-clocks $\y_j^\q$'s
	and new channel clocks $\hat\y_j^\pq$'s (which hold copies of $\p$-clocks $\y_j^\p$'s).
\end{proof}

\subsection{Missing proofs for Sec.~\ref{sec:desync}}

\lemDesync
\begin{proof}
	Every run in $\sem S$ is also a run in $\desem S$,
	since the latter semantics is a weakening of the former.
	For the other direction, a run in $\desem S$ can be resynchronised by rescheduling all processes $\p$'s
	to execute $\elapse$ transitions at the same time in order for the local now value $\mu_\p(\x^\p_0)$ to be the same for every process. 
	Processes $\p, \q$ in the same polytree are in fact already synchronised $\mu_\p(\x^\p_0) = \mu_\q(\x^\q_0)$
	by the definition of $\desem S$.
	In order to resynchronise a sender process $\p$ with a receiver process $\q$ from another polytree with $\pq \in \C$,
	since $\q$ is always ahead of $\p$ in $\desem S$, 
	in general we need to anticipate the actions of $\p$,
	and in particular transmissions actions.
	This comes at the cost of potentially increasing the length of the contents of channels outgoing from $\p$.
	Since in $\desem S$ the initial value of channel clocks $\mu^\pq$ is automatically advanced
	by the amount of desynchronisation $\delta = \mu(\x^\q_0) - \mu(\x^\p_0) \geq 0$ between sender $\p$ and receiver $\q$,
	in the synchronised run we have $\delta = 0$ and the initial value of channel clocks sent is just $\mu^\pq$.
\end{proof}

\subsection{Missing proofs for Sec.~\ref{sec:rendezvous}}

\lemRendezvous
\begin{proof}
	Every run in $\rvsem S$ is (essentially) a run in $\desem S$ since the former semantics is a strengthening of the latter;
	``essentially'' means that we need to split atomic send/receive operations into a send followed by a receive operation
	in order to properly get a run in $\desem S$.
	For the other direction, it has been shown that on polyforest topologies
	a run in any system of communicating possibly infinite state automata (and in particular in $\desem S$)
	can be rescheduled in order for transmissions to be immediately followed by matching receptions \cite{HeusnerLerouxMuschollSutre:LMCS:2012}.
	By executing these pairs of matching send/receive operations atomically we obtain rendezvous synchronisation.
\end{proof}

\subsection{Missing proofs for Sec.~\ref{sec:register-counter}}

In this section we prove the following theorem.
\thmRacDecidable

First, we introduce some concepts used in the proof.
An \emph{automorphism} of the cyclic order structure $\K = (\I, K)$
is a bijection $\alpha : \I \to \I$ that preserves and reflects $K$,
i.e., $K(a, b, c)$ iff $K(\alpha(a), \alpha(b), \alpha(c))$;
automorphisms are extended point-wise to register valuations $\I^\R$.
The \emph{orbit} of a register valuation $r \in \I^\R$ is the set of valuations $s$
\st there exists an automorphism $\alpha$ transforming $r$ into $s = \alpha(r)$;
the orbit of $r$ is denoted $\orbit r \subseteq \I^\Reg$.
The structure $\K$ is homogeneous~\cite{survey}, 
and thus the set of valuations $\I^\Reg$
is partitioned into exponentially many distinct orbits,
denoted $\orbit {\I^\Reg}$.
We extend the satisfaction relation from valuations $r \models \varphi$
to orbits of valuations $o \in \orbit {\I^\Reg}$,
and write $o \models \varphi$ whenever there exists $r \in o$ \st $r \models \varphi$;
by the definition of orbit, the choice of representative $r$ does not matter.
An orbit, like a region for clock valuations,
is an equivalence class of valuations which are indistinguishable from the point of view of $\K$;
for instance $(0.2, 0.3, 0.7)$, $(0.7, 0.2, 0.3)$, and $(0.8, 0.2, 0.3)$ belong to the same orbit,
while $(0.2, 0.3, 0.3)$ belongs to a different orbit.
We are now ready to prove the theorem above.
\begin{proof}
	Let $\R = \tuple {\L, \l_I, \l_F, \Reg, \Cnt, \Delta}$ be a \rac with maximal constant $M$,
	where we assume \wlg that all modular tests are over the same modulus $M$.
	We construct a \rac without registers $\R'$
	where counters can only be incremented, decremented, and tested for zero
	(i.e., an ordinary counter machine).
	Let $\R' = \tuple {\L', \l_I', \l_F', \Reg', \Cnt', \Delta'}$,
	where the set of locations is $\L' = \L \times \orbit{\I^\Reg} \times \set{0, \dots, M-1}^\Cnt$,
	the initial location is $\l_I' = (\l_I, \orbit {\bar 0}, \bar 0)$,
	the final location is $\l_F' = (\l_F, \orbit {\bar 0}, \bar 0)$,
	the set of registers is empty $\Reg' = \emptyset$,
	the set of counters does not change $\Cnt' = \Cnt$,
	and the set of transition rules $\Delta'$ is defined as follows.
	Let $\l \goesto \op \l'$ be a transition in $\Delta$.
	Then we have one or more transitions in $\Delta'$ of the form
	$(\l, o, \lambda) \goesto {\op'} (\l', o', \lambda')$
	if any of the following conditions is satisfied.
	If $\op = \nop$, then $\op' = \nop, o' = o, \lambda' = \lambda$.
	If $\op = \incrop \n$, then $\op' = \op, o' = o, {\lambda' = \lambda[\n \mapsto (\lambda(\n) + 1) \!\!\mod\! M]}$,
	and if $\op = \decrop \n$, then $\op' = \op, o' = o, {\lambda' = \lambda[\n \mapsto (\lambda(\n) - 1) \!\!\mod\! M]}$.
	If $\op = \testop{\n \leq k}$, then we have the following sequence of transitions for every $0 \leq h \leq k$:
	$\op' = \left((\decrop \n)^h; \testop {\n = 0}; (\incrop \n)^h\right)$, $o' = o, \lambda' = \lambda$.
	Upper bound constraints are thus reduced to ordinary zero tests.
	If $\op = \testop{\n \geq k}$, then we have a sequence of transitions
	$\op' = \left((\decrop \n)^k; (\incrop \n)^k\right)$, $o' = o, \lambda' = \lambda$.
	If $\op = \testop {\n \eqv M k}$, then we have a transition
	$\op' = \nop$, $o' = o, \lambda' = \lambda$, provided that $\lambda \models \n \eqv M k$.
	If $\op = \guessop \r$, then $\op' = \nop$,
	$\lambda' = \lambda$,
	and there is a transition for every orbit $o' \in \orbit {\I^\Reg}$
	which agrees with $o$ on $\Reg \setminus \set \r$,
	and takes an arbitrary value on $\r$,
	i.e., for every $o' \in \orbit {\setof {r'} {r \in o, r'[\r \mapsto r(\r)] = r}}$.
	Finally, if $\op = \testop \varphi$, then there is a transition
	$\op' = \nop$, $o' = o, \lambda' = \lambda$,
	provided that $o \models \varphi$.
	It is standard to show that $\sem \R, \sem {\R'}$ are equivalent \cite{BKL11full}.
	Moreover, if $\R$ has at most one counter with inequality tests,
	then we obtain a counter machine $\R'$ where at most one counter can be tested for zero,
	and the latter model is decidable \cite{Reinhardt:TCS:2008,Bonnet:MFCS:2011}.
\end{proof}

\subsection{Missing proofs for Sec.~\ref{sec:translation}}
\label{app:proof}

\lemTranslation
\begin{proof}
	Assume $c \approx d$. We show two properties of $\approx$.
	Let successor configurations $c', d'$ be of the form
	\begin{align*}
		c' &= \tuple {\family {\ell'} \p \P, \mu'}, \textrm{ and } \\
		d' &= \tuple {\l' = \tuple {\family {\ell'} \p \P, \lambda'}, n', r'}.
	\end{align*}
	\begin{itemize}
		
		\item{} {[\bf Forth property]}
		For every transition $c \rvto \op c'$,
		there is a sequence of transitions $d \to^* d'$
		\st again $c' \approx d'$.
		
		\item{} {[\bf Back property]}
		For every \emph{minimal} sequence of transitions $d \to^* d'$
		there is a transition $c \rvto \op c'$
		\st again $c' \approx d'$.
		Minimality is \wrt the length of any sequence of transitions from $d$
		to any configuration of the form $d'$ above.
		(For instance, we do not allow/take in consideration $d \to^* \tuple {\bullet, n',r'}$
		where the latter is an internal state used during the simulation.)
	\end{itemize}
	It is clear that the initial and final configurations of the two systems are $\approx$-equivalent, and thus by the forth and back properties,
	$\rvsem \S$ and $\sem \R$ are equivalent.
	
	\mysubparagraph{Proof of the forth property.}
	Let $c \rvto \op c'$. We proceed by case analysis on $\op$.
	
	\myitem 1 If $\op = \nop$, then $\mu' = \mu$.
			We take $\lambda' = \lambda$, $n' = n$, $r' = r$.
			Clearly $d \goesto \nop d'$ with $c' \approx d'$.
		
	\myitem 2 Let $\op = \delta \in \Qpos$ be a local time elapse operation for process $\p$.
			%
			%
			Let the amount of time elapsed by $\p$ be $\delta = \mu'(\x^\p_0) - \mu(\x^\p_0) \geq 0$.
			By the definition of desynchronised semantics,
			$\mu'(\x^\p) = \mu(\x^\p) + \delta$ for every clock $\x^\p \in \X^\p$ of $\p$,
			and	$\mu'(\x) = \mu(\x)$ for every other clock $\x \in \X \setminus \X^\p$.
			We show how to update $\lambda, n, r$ accordingly.
			According to the definition of $\R$,
			we start by taking transition 
			$$\tuple{\tuple{\family \ell \p \P, \lambda}, n, r} \goesto \nop \tuple{\bullet_\lambda, n, r}.$$

			We first simulate the integer time elapse $\floor \delta$.
			Recall that $\Cnt^+ = \setof {\n^\qp} {\qp \in \C}$
			is the set of counters corresponding to channels incoming to $\p$
			and $\Cnt^- = \setof {\n^\pq} {\pq \in \C}$ for outgoing channels.
			We increase integer values $\floor \delta$ times, obtaining
			$$\tuple{\bullet_\lambda, n, r} \goesto {(\incrop {\Cnt^+}; \decrop {\Cnt^-})^{\floor \delta}} \tuple{\bullet_{\lambda''}, n'', r},$$
			where $\lambda'' = \lambda[\X^\p \mapsto \X^\p + \floor \delta]$
			and $n'' = n[\Cnt^+ \mapsto \Cnt^+ + \floor \delta, \Cnt^- \mapsto \Cnt^- - \floor \delta]$.
			In order for this transition to be legal,
			it must be the case that for every counter $\n^\pq \in \Cnt^-$,
			$n(\n^\pq) \geq \floor \delta$.
			By the definition of $\delta$ we have
			$\mu(\x^\q_0) - \mu(\x^\p_0) = \mu'(\x^\q_0) - (\mu'(\x^\p_0) - \delta) = \mu'(\x^\q_0) - \mu'(\x^\p_0) + \delta$.
			By the definition of desynchronised semantics,
			$\mu'(\x_0^\q) \geq \mu'(\x_0^\p)$,
			and thus we conclude
			\begin{align}
				\label{eq:desynch:amount}
				\mu(\x^\q_0) - \mu(\x^\p_0) \geq \delta
			\end{align}
			By~\eqref{eq:desynch:integral},
			$n(\n^\pq) = \floor {\mu(\x^\q_0) - \mu(\x^\p_0)}$,
			and thus in particular $n(\n^\pq) \geq \floor \delta$.

			We now simulate the fractional time elapse $\fract \delta$.
			We save the previous value of $\hat\x^\p_0$ in $\hat\x^\p_1$
			and we guess a new fractional ``now'' for process $\p$:
			$$	\tuple{\bullet_{\lambda''}, n'', r} \goesto {
					\guessop {\hat x^\p_1};
					\testop {\hat \x^\p_1 = \hat \x^\p_0};
					\guessop {\hat x^\p_0}}
				\tuple{\bullet_{\lambda''}^2, n'', r'},$$
			where $r' = r[\hat\x^\p_1 \mapsto r(\hat\x^\p_0), \hat\x^\p_0 \mapsto r(\hat\x^\p_0) \oplus \fract\delta]$.
			Eq.~\eqref{eq:registers} is satisfied for $\mu', r'$, since
			\begin{align*}
				\fract{\mu'(\x^\p)} &=
				\fract{\mu(\x^\p) + \delta} =
				\fract{\mu(\x^\p)} \oplus \fract \delta = && \textrm{ by \eqref{eq:registers} } \\
				&= r(\hat \x^\p_0) \ominus r(\hat \x^\p) \oplus \fract \delta =
				r'(\hat \x^\p_0) \ominus r(\hat \x^\p) =
				r'(\hat \x^\p_0) \ominus r'(\hat \x^\p).
			\end{align*}
			Also Eq.~\eqref{eq:desynch:fractional} is satisfied, since
			\begin{align*}
				r'(\hat \x^\q_0) \ominus r'(\hat \x^\p_0) &=
				r(\hat \x^\q_0) \ominus (r(\hat \x^\p_0) \oplus \delta) =
				(r(\hat \x^\q_0) \ominus r(\hat \x^\p_0)) \ominus \delta = && \textrm{ by \eqref{eq:desynch:fractional} } \\
				&= \fract {\mu(\x^\q_0) - \mu(\x^\p_0)} \ominus \delta =
				\fract {\mu'(\x^\q_0) - (\mu'(\x^\p_0) - \delta)} \ominus \delta = \\
				&= \mu'(\x^\q_0) \ominus \mu'(\x^\p_0) \oplus \delta \ominus \delta = \mu'(\x^\q_0) \ominus \mu'(\x^\p_0) = \\
				&= \fract{\mu'(\x^\q_0) - \mu'(\x^\p_0)}.
			\end{align*}

			We now fix the integer value of those clocks whose fractional value was wrapped around zero
			one more time than the fractional value of $\x^\p_0$.
			Let $\X = \X^\p \cup \setof {\x^\pq}{\x^\pq \in \C} \cup \setof {\x^\qp}{\x^\qp \in \C}$
			be the set of all possibly affected clocks.
			The set of local clocks of $\p$ to be further increased by one is $\Xyes$,
			the set of counters for incoming channels to be further increased by one is $\Cntyes^+$,
			and the set of counters for outgoing channels to be further decreased by one is $\Cntyes^-$,
			where:
			\begin{align*}
				\Xyes &= \setof {\x^\p \in \X^\p}
					{\floor {\mu'(\x^\p)} = \floor {\mu(\x^\p)} + \floor \delta + 1}, \\
				\Cntyes^+ &= \setof {\n^\qp \in \Cnt^+}
					{\floor {\mu'(\x^\p_0) - \mu'(\x^\q_0)} = \floor {\mu(\x^\p_0) - \mu(\x^\q_0)} + \floor \delta + 1}, \textrm{ and } \\
				\Cntyes^- &= \setof {\n^\pq \in \Cnt^-}
					{\floor {\mu'(\x^\q_0) - \mu'(\x^\p_0)} = \floor {\mu(\x^\q_0) - \mu(\x^\p_0)} - \floor \delta - 1}.
			\end{align*}
			The set of registers to be checked
			$\SS = \SS^+ \cup \SS^-$ with
			$\SS^+ =	\setof {\hat \x^\p} {\x^\p \in \X^\p} \cup
						\setof {\hat \x^\q_0} {\qp \in \C}$,
			$\SS^- = 	\setof {\hat \x^\q_0} {\pq \in \C}$
			is thus partitioned into $\SS = \SSyes \cup \SSno$,
			where $\SSyes = \setof {\hat\x^\p} {\x^\p \in \Xyes} \cup \setof {\hat\x^\q_0} {\n^\qp \in \Cntyes^+ \textrm{ or } \n^\pq \in \Cntyes^-}$
			and $\SSno = \SS \setminus \SSyes$.
			We take transition
			\begin{align}
				\label{eq:trans:elapse}
				\tuple{\bullet_{\lambda''}^2, n'', r'} \goesto
					{\testop{\varphi}; \incrop{(\Cntyes^+)}; \decrop{(\Cntyes^-)}; \nop}
						\tuple{\tuple{\family \arr \p \P, \lambda'}, n', r'},
			\end{align}
			where $\varphi$ was defined in \eqref{eq:cond:phi},
			$n'=n''[\Cntyes^+ \mapsto \Cntyes^+ + 1, \Cntyes^- \mapsto \Cntyes^- - 1]$,
			and $\lambda' = \lambda''[\Xyes \mapsto \Xyes + 1]$.
			We need to argue that this transition can in fact be taken,
			and that equations \eqref{eq:unary} and \eqref{eq:desynch:integral} hold again for $\lambda', n'$.

			First of all, we argue that $r' \models \varphi$ holds.
			There are three cases to consider.
			\begin{enumerate}

				\item If $\x^\p \in \Xyes \subseteq \SSyes$, then the integral value of $\x^\p$ after time elapse equals
				$\floor {\mu'(\x^\p)} = \floor{\mu(\x^\p) + \delta} = \floor {\mu(\x^\p)} + \floor \delta + 1$,
				which holds precisely when $\fract{\mu(\x^\p)} + \fract\delta \geq 1$.
				By~\eqref{eq:registers} and by the definition of $\delta$,
				$(r(\hat \x^\p_0) \ominus r(\hat \x^\p)) + (r'(\hat \x^\p_0) \ominus r(\hat \x^\p_0)) \geq 1$.
				By the definition of $r'$,
				$(r'(\hat \x^\p_1) \ominus r'(\hat \x^\p)) + (r'(\hat \x^\p_0) \ominus r'(\hat \x^\p_1)) \geq 1$.
				This is equivalent to say that the distance on the unit circle
				of going from $r'(\hat \x^\p)$ to $r'(\hat \x^\p_1)$
				and then from the former to $r'(\hat \x^\p_0)$, is at least one.
				This is the same as saying $K_1(r'(\hat \x^\p_1), r'(\hat \x^\p), r'(\hat \x^\p_0))$
				as defined in \eqref{eq:K1-K2}.

				\item If $\hat \x^\q_0 \in \SSyes \cap \SS^+$ (\ie $\n^\qp \in \Cntyes^+$), then 
				$\floor {\mu'(\x^\p_0) - \mu'(\x^\q_0)} = \floor {(\mu(\x^\p_0) + \delta) - \mu(\x^\q_0)} = \floor {\mu(\x^\p_0) - \mu(\x^\q_0) + \delta} = \floor {\mu(\x^\p_0) - \mu(\x^\q_0)} + \floor \delta + 1$,
				and the last equality holds precisely when $\fract{\mu(\x^\p_0) - \mu(\x^\q_0)} + \fract\delta \geq 1$.
				By~\eqref{eq:desynch:fractional} and by the definition of $\delta$,
				this is equivalent to
				$(r(\hat \x^\p_0) \ominus r(\hat \x^\q_0)) + (r'(\hat \x^\p_0) \ominus r(\hat \x^\p_0)) \geq 1$.
				By the definition of $r'$,
				$(r'(\hat \x^\p_1) \ominus r'(\hat \x^\q_0)) + (r'(\hat \x^\p_0) \ominus r'(\hat \x^\p_1)) \geq 1$,
				which, similarly as before,
				is equivalent to $K_1(r'(\hat \x^\p_1), r'(\hat \x^\q_0), r'(\hat \x^\p_0))$.

				\item The argument for $\hat \x^\q_0 \in \SSyes \cap \SS^-$ (\ie $\n^\pq \in \Cntyes^-$) is analogous:
				$\floor {\mu'(\x^\q_0) - \mu'(\x^\p_0)} = \floor {\mu(\x^\q_0) - (\mu(\x^\p_0) + \delta)} = \floor {\mu(\x^\q_0) - \mu(\x^\p_0) - \delta} = \floor {\mu(\x^\q_0) - \mu(\x^\p_0)} - \floor \delta - 1$.
				Since by the desynchronised semantics $\mu'(\x^\q_0) - \mu'(\x^\p_0) \geq 0$
				and thus $\mu(\x^\q_0) - \mu(\x^\p_0) \geq \delta$,
				the equality $\floor {\mu(\x^\q_0) - \mu(\x^\p_0) - \delta} = \floor {\mu(\x^\q_0) - \mu(\x^\p_0)} - \floor \delta - 1$
				holds precisely when
				$\fract{\mu(\x^\q_0) - \mu(\x^\p_0)} < \fract \delta$.
				By~\eqref{eq:desynch:fractional} and the definition of $\delta$,
				this is equivalent to
				$r(\hat \x^\q_0) \ominus r(\hat \x^\p_0) < r'(\hat \x^\p_0) \ominus r(\hat \x^\p_0)$,
				which by the definition of $r'$ is the same as
				$r'(\hat \x^\q_0) \ominus r'(\hat \x^\p_1) < r'(\hat \x^\p_0) \ominus r'(\hat \x^\p_1)$.
				This is the same as saying that, when going along the unit circle,
				the distance from $r'(\hat \x^\p_1)$ to $r'(\hat \x^\q_0)$
				is strictly smaller than the distance from the same $r'(\hat \x^\p_1)$
				to $r'(\hat \x^\p_0) $, \ie, $K_2(r'(\hat \x^\p_1), r'(\hat \x^\q_0), r'(\hat \x^\p_0))$
				as defined in \eqref{eq:K1-K2}.
			\end{enumerate}

			Since the three arguments in the previous paragraph are equivalences,
			for $\x^\p \in \X^\p \setminus \Xyes$
			$K_1(r'(\hat \x^\p_1), r'(\hat \x^\p), r'(\hat \x^\p_0))$ does not hold.
			Similarly, for $\x^\q_0 \in \SSno \cap \SS^+$,
			$K_1(r'(\hat \x^\p_1), r'(\hat \x^\q_0), r'(\hat \x^\p_0))$ does not hold,
			and for $\hat \x^\q_0 \in \SSno \cap \SS^-$,
			$K_2(r'(\hat \x^\p_1), r'(\hat \x^\q_0), r'(\hat \x^\p_0))$ does not hold.
			This concludes showing that $r' \models \varphi$ holds.

			In order to conclude that the transition \eqref{eq:trans:elapse} can be taken,
			we need to show that counters in $\Cntyes^-$ can be decremented by one,
			\ie, that for every counter $\n^\pq \in \Cntyes^-$,
			$n''(\n^\pq) > 0$.
			By the definition of $n''$,
			this is the same as $n(\n^\pq) > \floor \delta$,
			and by \eqref{eq:desynch:integral} this is equivalent to
			$\floor {\mu(\x^\q_0) - \mu(\x^\p_0)} > \floor \delta$.
			By the definition of $\Cntyes^-$ above,
			$\floor {\mu(\x^\q_0) - \mu(\x^\p_0)} = \floor {\mu'(\x^\q_0) - \mu'(\x^\p_0)} + \floor \delta + 1$,
			and by the definition of desynchronised semantics $\mu'(\x^\q_0) \geq \mu'(\x^\p_0)$,
			and thus $\floor {\mu(\x^\q_0) - \mu(\x^\p_0)} \geq \floor \delta + 1 > \floor \delta$
			as required.

			We finally show that \eqref{eq:unary} and \eqref{eq:desynch:integral} hold again for $\lambda', n'$.
			Consider $\lambda'$
			and we need to show $\lambda'(\x) = [\floor{\mu'(\x)}]$ for every clock $\x \in \X$.
			By the definition of $\lambda'$,
			1) if $\x^\p \in \Xyes$, then $\lambda'(\x^\p) = \lambda(\x^\p) + \floor \delta + 1$,
			2) if $\x^\p \in \X^\p \setminus \Xyes$, then $\lambda'(\x^\p) = \lambda(\x^\p) + \floor \delta$, and
			3) for every other $\x^\q \in \X \setminus \X^\p$, $\lambda'(\x^\q) = \lambda(\x^\q)$.
			By~\eqref{eq:unary} applied to $\lambda$,
			$\lambda(\x) = [\mu(\x)]$.
			Case 3) is immediate since $\mu'(\x^\q) = \mu(\x^\q)$.
			Regarding case 1),
			by definition of $\Xyes$ we have
			$\floor{\mu'(\x^\p)} = \floor{\mu(\x^\p)} + \floor \delta + 1$,
			and by taking the unary class we have
			$[{\mu'(\x^\p)}] = [{\mu(\x^\p)} + \floor \delta + 1] = [{\mu(\x^\p)}] + \floor \delta + 1 = \lambda(\x^\p) + \floor \delta + 1 = \lambda'(\x^\p)$.
			Case 2) is analogous.

			Now consider $n'$, and we need to show $n'(\n^\qr) = \floor{\mu'(\x^\r_0) - \mu'(\x^\q_0)}$
			for every channel $\qr \in \C$.
			For every channel $\qr$ not mentioning $\p \not\in \set {\q, \r}$,
			the claim is immediate since $n'(\n^\qr) = n(\n^\qr)$ by the definition of $n'$,
			and $\mu', \mu$ take the same value on clocks $\x^\r_0$ and, \resp, $\x^\q_0$.
			For every counter $\n^\qp \in \Cntyes^+$ corresponding to an incoming channel $\qp$,
			by definition of $\Cntyes^+$, $n'$, and $n''$ we have
			$\floor {\mu'(\x^\p_0) - \mu'(\x^\q_0)} = \floor {\mu(\x^\p_0) - \mu(\x^\q_0)} + \floor \delta + 1 = n(\n^\qp) + \floor \delta + 1 = n'(\n^\qp)$, as required.
			If $\n^\qp \in \Cnt^+\setminus\Cntyes^+$, then
			$\floor {\mu'(\x^\p_0) - \mu'(\x^\q_0)} = \floor {\mu(\x^\p_0) - \mu(\x^\q_0)} + \floor \delta = n(\n^\qp) + \floor \delta = n'(\n^\qp)$.
			The two cases $\n^\pq \in \Cntyes^-$ and $\n^\pq \in \Cnt^- \setminus \Cntyes^-$ are similar.
		
	\myitem 3 If $\op = \testop \varphi$ is a test transition on local $\p$-clocks,
	then $\mu' = \mu$ and $\mu \models \varphi$.
	Take $\lambda' = \lambda$, $n' = n$, $r' = r$,
	and thus $c' \approx d'$.
	It remains to establish $d \goesto {\op'} d'$.
	There are two cases to consider.
	\begin{enumerate}
		\item In the first case, $\varphi$ is a non-diagonal inequality or modular constraint.
		By the definition of $\approx$,
		the unary class of $\mu$ is $[\mu] = \lambda$,
		and, by the definition of unary equivalence,
		$\lambda \models \varphi$,
		and thus the constraint can be checked by reading the local control state.
		By the definition of $\R$,
		$d \goesto {\op'} d'$ with $\op' = \nop$.
		
		\item In the second case, $\varphi = \x^\p \leq \y^\p$ is a fractional constraint
		with $\x^\p, \y^\p : \I$ fractional clocks.
		By assumption, $\fract{\mu(\x^\p)} \leq \fract{\mu(\y^\p)}$ holds.
		By~\eqref{eq:registers},
		$r(\hat \x^\p_0) \ominus r(\hat \x^\p) \leq r(\hat \x^\p_0) \ominus r(\hat \y^\p)$.
		By the definition of $K_0$,
		it holds that $K_0(r(\hat \y^\p), r(\hat \x^\p), r(\hat \x^\p_0))$;
		\cf\eqref{eq:fractional:local}.
		Thus $d \goesto {\op'} d'$ with $\op' = \testop {K_0(\hat \y^\p, \hat \x^\p, \hat \x^\p_0}$.
	\end{enumerate}	
	
	\myitem 4 If $\op = \resetop {\x^\p}$ is a reset transition,
	then $\mu' = \mu[\x^\p \mapsto 0]$.
	We update the unary class as $\lambda' = \lambda[\x^\p \mapsto 0]$.
	Counters are unchanged $n' = n$.
	There are two cases to consider.
	\begin{enumerate}
		\item If $\x^\p : \N$ is an integral clock,
		then also registers are unchanged $r' = r$
		and we directly have $d \goesto \nop d'$ with $c' \approx d'$.
	
		\item If $\x^\p : \I$ is a fractional clock,
		then we need to update its corresponding register $\hat \x^\p$
		by taking $r' = r[\hat\x^\p \mapsto r(\hat\x^\p_0)]$.
		We execute $\op' = (\guessop {\hat \x^\p}; \testop {\hat \x^\p = \hat \x^\p_0})$
		as per the definition of $\R$,
		and we have $d \goesto {\op'} d'$.
		After the transitions, \eqref{eq:registers} is satisfied
		since $0 = \fract {\mu'(\x^\p)} = r'(\hat \x^\p) \ominus r'(\hat \x^\p_0) = r(\hat \x^\p_0) \ominus r(\hat \x^\p_0) = 0$.
	\end{enumerate}
	
	\myitem 5 If $\op = (\sndop \m \pq {\psi^\p}; \rcvop \m \pq {\psi^\q})$ is a send-receive pair,
	then ${\op^\p = \sndop \m \pq {\psi^\p}}$,
	${\op^\q = \rcvop \m \qp {\psi^\q}}$,
	and clocks are unchanged $\mu' = \mu$.
	Thus, the unary abstraction $\lambda' = \lambda$,
	counters $n' = n$,
	and registers $r' = r$ are also unchanged.
	It is clear that $c' \approx d'$.
	It remains to establish $d \goesto {\op'} d'$ for a suitable choice of $\op'$.
	
	By the definition of rendezvous semantics,
	there exists a valuation for clock channels ${\mu^\pq \in \Qpos^{\X^\pq}}$ \st
	${(\mu, \mu^\pq) \models \psi^\p}$ and
	${(\mu, \mu^\pq + \delta) \models \psi^\q}$
    with $\delta = \mu(\x^\q_0) - \mu(\x^\p_0)$.
	By~\eqref{eq:desynch:integral} and \eqref{eq:desynch:fractional},
	\begin{align}
		\label{eq:delta}
		\floor \delta = n(\n^\pq) \qquad \textrm { and } \qquad
		\fract \delta = r(\hat\x^\q_0) \ominus r(\hat\x^\p_0).
	\end{align}
	There are now three cases to consider.
	\begin{enumerate}
		
		\item[\bf (5a)] In the first case,
		$\psi^\p \equiv \x^\pq = 0$, $\psi^\q \equiv \x^\pq \sim k$,
		and thus $\mu^\pq(\x^\pq) = 0$ and $\floor\delta \sim k$.
		\ignore{
		We distinguish three further cases depending on $\sim$.
		\begin{enumerate}
			\item Let $\psi^\q \equiv \x^\pq = k$.
			Then, $\delta = k$ iff $\floor \delta = k$ (since $k$ is an integer)
			and $\fract \delta = 0$.
			By~\eqref{eq:delta}, $n(\n^\pq) = k$ and $r(\hat\x^\q_0) = r(\hat\x^\p_0)$.
			We take $\op' = \testop {\n^\pq = k \wedge \hat\x^\p_0 = \hat\x^\q_0}$.

			\item Let $\psi^\q \equiv \x^\pq > k$.
			Then, $\delta > k$ iff either $\floor \delta > k$ or $\floor \delta = k$ and $\fract \delta > 0$.
			By~\eqref{eq:delta}, we take $\op' = \testop {\n^\pq > k \vee (\n^\pq = k \wedge \hat\x^\p_0 = \hat\x^\q_0)}$.

			\item If $\psi^\q \equiv \x^\pq < k$,
			then $\delta < k$ iff $\fract \delta \leq k -1$.
			By~\eqref{eq:delta}, we take $\op' = \testop {\n^\pq \leq k - 1}$.				
		\end{enumerate}
		}
		By~\eqref{eq:delta}, $n(\n^\pq) \sim k$.
		We take $\op' = \testop {\n^\pq \sim k}$.
		
		\item[\bf (5b)] In the second case,
		$\psi^\p \equiv \x^\pq = 0$, $\psi^\q \equiv \x^\pq \eqv M k$,
		and thus $\mu^\pq(\x^\pq) = 0$ and $\floor \delta \eqv M k$.
		%
		By~\eqref{eq:delta}, $n(\n^\pq) \eqv M k$.
		we take $\op' = \testop {\n^\pq \eqv M k}$.
		
		\item[\bf (5c)] In the third case,
		$\psi^\p \equiv \x^\pq = 0$, $\psi^\q \equiv \x^\pq \leq \x^\q$
		for fractional clocks $\x^\pq, \x^\q : \I$,
		and thus $\mu^\pq(\x^\pq) = 0$ and $\fract \delta \leq \fract {\mu(\x^\q)}$.
		By~\eqref{eq:registers} and \eqref{eq:delta}, this is the same as
		${r(\hat\x^\q_0) \ominus r(\hat\x^\p_0) \leq r(\hat\x^\q_0) \ominus r(\hat\x^\q)}$
		which by \eqref{eq:ominus} is equivalent to $K_0(r(\hat\x^\q), r(\hat\x^\p_0), r(\hat\x^\q_0))$.
		We take	${\op' = \testop {K_0(\hat \x^\q, \hat\x^\p_0, \hat\x^\q_0)}}$.
	\end{enumerate}
	
	\mysubparagraph{Proof of the back property.}
	Let $d \to^* d'$ be a minimal sequence of transitions.
	By minimality, no intermediate configuration when going from $d$ to $d'$
	is of the form $d' = \tuple{\tuple {\family {\ell'} \p \P, \lambda'}, n', r'}$.
	By inspection of the definition of $\R$,
	we need to consider five distinct cases.
	
	\myitem 1 In the first case, $\R$ is simulating a $\nop$ transition
	$\ell^\p \goesto \nop\!\!\mbox{}^\p \arr^\p$ of $\S$,
	and thus by minimality $d \goesto \nop d'$ in just one step,
	with $\lambda' = \lambda$, $n' = n$, and $r' = r$.
	Consequently, $c \goesto \nop c'$ in $\sem \S$
	with $c' = \tuple {\family {\ell'} \p \P, \mu}$
	where $\ell'^\q = \ell^\q$ for every $\q \in \P \setminus \set \p$,
	and thus $c' \approx d'$ as required.
	
	\myitem 2 In the second case, $\R$ is simulating a local $\elapse$ transition
	$\ell^\p \goesto \elapse\!\!\mbox{}^\p \arr^\p$ of process $\p$.
	This is the most involved case.
	By the definition of $\R$ and by minimality,
	transitions in $d \to^* d'$ decompose as follows:
	\begin{align*}
		d = \tuple{\tuple {\family {\ell} \p \P, \lambda}, n, r} \goesto \nop
		&\tuple {\bullet_\lambda, n, r} \goesto {(\incrop{\Cnt^+}; \decrop{\Cnt^-})^{\floor \delta}}
		\tuple {\bullet_{\lambda''}, n'', r} \goesto {\quad} \\
		\goesto {\guessop {\hat x^\p_1}; \testop {\hat \x^\p_1 = \hat \x^\p_0}}
		&\tuple {\bullet_{\lambda''}^1, n'', r''} \goesto {\guessop {\hat x^\p_0}}
		\tuple {\bullet_{\lambda''}^2, n'', r'} \goesto {\quad} \\
		\goesto {\testop{\varphi}; \incrop{(\Cntyes^+)}; \decrop{(\Cntyes^-)}}
		&\tuple {\bullet_{\lambda'}^3, n', r'} \goesto \nop
		\tuple {\tuple {\family {\ell'} \p \P, \lambda'}, n', r'} = d'
	\end{align*}
	where $\delta \in \Qpos$ it the total elapsed timed that is simulated,
	split into its discrete and fractional part $\delta = \floor \delta + \fract \delta$,
	$\lambda'' = \lambda[\X^\p \mapsto \X^\p + \floor \delta]$,
	$n'' = n[\Cnt^+ \mapsto \Cnt^+ + \floor \delta, \Cnt^- \mapsto \Cnt^- - \floor \delta]$,
	$r'' = r[\hat x^\p_1 \mapsto r(\hat x^\p_0)]$,
	$r' = r''[\hat x^\p_0 \mapsto r(\hat x^\p_0) \oplus \fract \delta]$,
	$r' \models \varphi$,
	$\lambda' = \lambda''[\Xyes \mapsto \Xyes + 1]$,
	and	$n' = n''[\Cntyes^+ \mapsto \Cntyes^+ + 1, \Cntyes^- \mapsto \Cntyes^- - 1]$.
	This is simulated in $\S$ by letting process $\p$ elapse $\delta$ time units
	and thus go to $c' = \tuple {\family {\ell'} \p \P, \mu'}$,
	with $\ell'^\q = \ell^\q$ for every $\q \in \P \setminus \set \p$,		
	where $\mu' = \mu[\forall \x^\p \in \X^\p \st \x^\p \mapsto \mu(\x^\p) + \delta]$
	(including the reference clock $\x^\p_0$).
	We need to show that the time elapse transition above is legal in $\S$,
	which by the desynchronised semantics amounts to establish that
	for every channel $\qr \in \C$, $\mu'(\x^\q_0) \leq \mu'(\x^\r_0)$.
	Since the value of $\x^\p_0$ increased during the time elapse transition,
	$\mu(\x^\q_0) = \mu'(\x^\q_0) \leq \mu'(\x^\p_0) = \mu(\x^\p_0) + \delta$
	is immediately satisfied for incoming channels $\qp \in \C$
	since $\mu(\x^\q_0) \leq \mu(\x^\p_0)$ follows from the fact that $c$ is a legal configuration in $\sem \S$.
	Let $\pq \in \C$ be an outgoing channel and we need to establish $\mu'(\x^\q_0) - \mu'(\x^\p_0) \geq 0$.
	The latter inequality will follow immediately from establishing \eqref{eq:desynch:integral} and \eqref{eq:desynch:fractional}
	for $n', r', \mu'$.
	For fractional parts, we have
	\begin{align*}
		\fract{\mu'(\x^\q_0) - \mu'(\x^\p_0)}
		&= \fract{\mu(\x^\q_0) - (\mu(\x^\p_0) + \delta)}
		= \fract{\mu(\x^\q_0) - \mu(\x^\p_0)} \ominus \delta =
		&&\quad \textrm{(by \eqref{eq:desynch:fractional})} \\
		&= (r(\hat \x^\q_0) \ominus r(\hat \x^\p_0)) \ominus \delta
		= (r'(\hat \x^\q_0) \ominus (r'(\hat \x^\p_0) \ominus \delta)) \ominus \delta = \\
		&= r'(\hat \x^\q_0) \ominus r'(\hat \x^\p_0),
	\end{align*}
	and thus \eqref{eq:desynch:fractional} is again satisfied for $r', \mu'$.
	For integral parts, we consider two cases, depending on whether the channel is incoming or outgoing.
	\begin{enumerate}
		
		\item For an outgoing channel $\pq$,
		\begin{align*}
			&\floor{\mu'(\x^\q_0) - \mu'(\x^\p_0)}
			=\ \floor{\mu(\x^\q_0) - (\mu(\x^\p_0) + \delta)} = \\
			=\ &\floor{\mu(\x^\q_0) - \mu(\x^\p_0) - \delta} = \\
			=\ &\floor{\mu(\x^\q_0) - \mu(\x^\p_0)} - \floor \delta - \condone {\fract{\mu(\x^\q_0) - \mu(\x^\p_0)} < \fract \delta} =
			&&\quad \textrm{(by \eqref{eq:desynch:integral}, \eqref{eq:desynch:fractional})} \\
			=\ &n(\n^\pq) - \floor \delta - \condone {r(\hat \x^\q_0) \ominus r(\hat \x^\p_0) < \fract \delta} =
			&&\quad \textrm{(by the def.~of $\delta$)} \\
			=\ &n(\n^\pq) - \floor \delta - \condone {r(\hat \x^\q_0) \ominus r(\hat \x^\p_0) < r'(\hat \x^\p_0) \ominus r(\hat \x^\p_0)} =
			&&\quad \textrm{(by the def.~of $r'$)} \\
			=\ &n(\n^\pq) - \floor \delta - \condone {r'(\hat \x^\q_0) \ominus r'(\hat \x^\p_1) < r'(\hat \x^\p_0) \ominus r'(\hat \x^\p_1)} =
			&&\quad \textrm{(by the def.~of $K_2$)} \\
			=\ &n(\n^\pq) - \floor \delta - \condone {K_2(r'(\hat\x^\p_1), r'(\hat\x^\q_0), r'(\hat\x^\p_0))} =
			&&\quad \textrm{(by the def.~of $n'$)} \\
			=\ &n'(\n^\pq),
		\end{align*}
		thus showing that \eqref{eq:desynch:integral} is again satisfied for $n', \mu'$ for outgoing channels $\pq$.
	
		\item For an incoming channel $\qp$,
		\begin{align*}
			&\floor{\mu'(\x^\p_0) - \mu'(\x^\q_0)}
			= \floor{\mu(\x^\p_0) + \delta - \mu(\x^\q_0)} = \\
			=\ &\floor{\mu(\x^\p_0) - \mu(\x^\q_0)} + \floor \delta +
				\condone {\fract{\mu(\x^\p_0) - \mu(\x^\q_0)} + \fract \delta \geq 1} =
			&&\quad \textrm{(by \eqref{eq:desynch:integral}, \eqref{eq:desynch:fractional})} \\
			=\ &n(\n^\qp) + \floor \delta +
				\condone {r(\hat \x^\p_0) \ominus r(\hat \x^\q_0)  + \fract \delta \geq 1} =
			&&\quad \textrm{(by the def.~of $\delta$)} \\
			=\ &n(\n^\qp) + \floor \delta +
				\condone {(r(\hat \x^\p_0) \ominus r(\hat \x^\q_0)) + (r'(\hat \x^\p_0) \ominus r(\hat \x^\p_0)) \geq 1} = \\
			=\ &n(\n^\qp) + \floor \delta +
				\condone {r(\hat \x^\p_0) \ominus r(\hat \x^\q_0) \geq r(\hat \x^\p_0) \ominus r'(\hat \x^\p_0)} =
			&&\quad \textrm{(by the def.~of $r'$)} \\
			=\ &n(\n^\qp) + \floor \delta +
				\condone {r'(\hat \x^\p_1) \ominus r'(\hat \x^\q_0) \geq r'(\hat \x^\p_1) \ominus r'(\hat \x^\p_0)} =
			&&\quad \textrm{(by the def.~of $K_1$)} \\
			=\ &n(\n^\qp) + \floor \delta +
				\condone {K_1(r'(\hat \x^\p_1), r'(\hat \x^\q_0), r'(\hat \x^\p_0))} =
			&&\quad \textrm{(by def.~of $n'$)} \\
			=\ &n'(\n^\pq).
		\end{align*}
		thus showing that \eqref{eq:desynch:integral} is again satisfied for $n', \mu'$ for incoming channels $\qp$.
	\end{enumerate}
	
	Also \eqref{eq:registers} holds:
	\begin{align*}
		\fract{\mu'(\x^\p)}
		&= \fract{\mu(\x^\p) + \delta} = \fract{\mu(\x^\p)} \oplus \delta =
		&& \quad \textrm{(by \eqref{eq:registers})} \\
		&= (r(\hat \x^\p_0) \ominus r(\hat \x^\p)) \oplus \delta
		= (r(\hat \x^\p_0) \oplus \delta) \ominus r(\hat \x^\p) = \\
		&= r'(\hat \x^\p_0) \ominus r'(\hat \x^\p).
	\end{align*}
	
	Finally, also \eqref{eq:unary} holds:
	\begin{align*}
		[{\mu'(\x^\p)}]
		=\ &[{\mu(\x^\p) + \delta}] = \\
		=\ &[{\mu(\x^\p)}] + \floor \delta +
			\condone {\fract{\mu(\x^\p)} + \fract \delta \geq 1} =
		&&\quad \textrm{(by \eqref{eq:unary}, \eqref{eq:registers})} \\
		=\ &\lambda(\x^\p) + \floor \delta +
			\condone {r(\hat \x^\p_0) \ominus r(\hat \x^\p) + \fract \delta \geq 1} =
		&&\quad \textrm{(by the def.~of $\delta$)} \\
		=\ &\lambda(\x^\p) + \floor \delta +
			\condone {(r(\hat \x^\p_0) \ominus r(\hat \x^\p)) + (r'(\hat \x^\p_0) \ominus r(\hat \x^\p_0)) \geq 1} =
		&&\quad \textrm{(by the def.~of $r'$)} \\
		=\ &\lambda(\x^\p) + \floor \delta +
				\condone {(r'(\hat \x^\p_1) \ominus r'(\hat \x^\p)) + (r'(\hat \x^\p_0) \ominus r'(\hat \x^\p_1)) \geq 1} = \\
		=\ &\lambda(\x^\p) + \floor \delta +
			\condone {r'(\hat \x^\p_1) \ominus r'(\hat \x^\p) \geq r'(\hat \x^\p_1) \ominus r'(\hat \x^\p_0)} =
		&&\quad \textrm{(by the def.~of $K_1$)} \\
		=\ &\lambda(\x^\p) + \floor \delta +
			\condone {K_1(r'(\hat \x^\p_1), r'(\hat \x^\p), r'(\hat \x^\p_0))} =
		&&\quad \textrm{(by def.~of $\lambda'$)} \\
		=\ &\lambda'(\x^\p).
	\end{align*}
	
	Altogether, this establishes that the transition $c \goesto \delta c'$ is legal
	and that $c' \approx d'$, as required.
	
	\myitem 3 In the third case, $\R$ is simulating a test transition
	$\ell^\p \goesto {\testop\varphi}\!\!\mbox{}^\p \arr^\p$.
	By minimality, $d \goesto \op d'$ 
	with $d' = \tuple{\tuple {\family {\ell'} \p \P, \lambda}, n, r}$
	where $\ell'^\q = \ell^\q$ for every $\q \in \P \setminus \set \p$.
	Accordingly, we take $c' = \tuple {\family {\ell'} \p \P, \mu}$
	and thus $c' \approx d'$ follows immediately from $c \approx d$.
	We need to show that $c \goesto {\testop\varphi} c'$.
	Following the definition of $\R$,
	we proceed by a case analysis on $\op$.
	\begin{enumerate}
		
		\item If $\op = \nop$, then $\varphi \equiv \x^\p \sim k$ or $\varphi \equiv \x^\p \eqv m k$
		and it holds that $\lambda \models \varphi$.
		In the first case, this means that $\lambda(\x^\p) \sim k$ holds.
		By~\eqref{eq:unary},
		$\lambda(\x^\p) = [{\mu(\x^\p)}]$
		and since the unary equivalence is sound when computed \wrt the maximal constant,
		$\floor {\mu(\x^\p)} \sim k$ holds.
		Since $\x^\p : \N$ is an integral clock, $\mu \models \varphi$ holds, as required.
		The reasoning for the second case is analogous.

		\item If $\op = \testop {K_0(\hat \y^\p, \hat \x^\p, \hat \x^\p_0)}$,
		then $\varphi \equiv \x^\p \leq \y^\p$ for fractional clocks $\x^\p, \y^\p : \I$.
		Thus $K_0(r(\hat \y^\p), r(\hat \x^\p), r(\hat \x^\p_0))$ holds.
		By~\eqref{eq:fractional:local},
		$r(\hat \x^\p_0) \ominus r(\hat \x^\p) \leq r(\hat \x^\p_0) \ominus r(\hat \y^\p)$.
		By~\eqref{eq:registers},
		$\fract{\mu(\x^\p)} \leq \fract{\mu(\y^\p)}$,
		and thus $\mu \models \varphi$ holds, as required.
	\end{enumerate}
	

	\myitem 4 In the fourth case, $\R$ is simulating a reset transition
	$\ell^\p \goesto {\resetop{\x^\p}}\!\!\mbox{}^\p \arr^\p$
	for a clock $\x^\p$ of process $\p$.
	By minimality, $d \goesto \op d'$ 
	with $d' = \tuple{\tuple {\family {\ell'} \p \P, \lambda'}, n, r'}$
	where $\ell'^\q = \ell^\q$ for every $\q \in \P \setminus \set \p$.
	We take $c' = \tuple {\family {\ell'} \p \P, \mu'}$
	with $\mu' = \mu[\x^\p \mapsto 0]$.
	Clearly, $c \goesto {\resetop{\x^\p}} c'$ holds.
	In order to show that \eqref{eq:unary}, \eqref{eq:registers} hold again for $\lambda', r', \mu'$,
	we do a case analysis on $\op$.
	\begin{enumerate}
		
		\item In the first case, $\op = \nop$.
		By the definition of $\R$,
		$\x^\p : \N$ is an integral clock
		and $\lambda' = \lambda[\x^\p \mapsto 0]$ and $r' = r$.
		Obviously $\lambda'(\x^\p) = [0] = [{\mu'(\x^\p)}]$,
		and for every other clock $\x^\q \neq \x^\p$,
		$\lambda'(\x^\p) = \lambda(\x^\p) = \textrm{(by~\eqref{eq:unary})} = [{\mu(\x^\q)}] = [{\mu'(\x^\q)}]$.
		Thus, \eqref{eq:unary} holds again for $\lambda', \mu'$.
		(That \eqref{eq:registers} holds is trivial since $r' = r$ and $\mu'(\x^\q) = \mu(\x^\q)$ for fractional clocks $\x^\q : \I$.)
		
		\item In the second case, $\op = (\guessop {\hat\x^\p}; \testop {\hat\x^\p = \hat\x^\p_0})$.
		Consequently, $r' = r[\hat\x^\p \mapsto r(\hat\x^\p_0)]$.
		Therefore, $r'(\hat \x^\p_0) \ominus r'(\hat \x^\p) = r(\hat \x^\p_0) \ominus r(\hat \x^\p_0) = 0 = \fract{\mu'(\x^\p)}$.
		Thus, \eqref{eq:registers} holds again for $r', \mu'$.
		(That \eqref{eq:unary} holds is trivial since $\lambda' = \lambda$ and $\mu'(\x^\q) = \mu(\x^\q)$ for integral clocks $\x^\q : \N$.)
	\end{enumerate}
	
	\myitem 5 In the fifth, and last case, $\R$ simulates a send-receive pair of transitions
	${\ell^\p \goesto {\op^\p}\!\!\mbox{}^\p \arr^\p}$ of $\p$ with $\op^\p = \sndop m \pq {\psi^\p}$ and
	${\ell^\q \goesto {\op^\q}\!\!\mbox{}^\q \arr^\q}$ of $\q$ with $\op^\q = \rcvop m \pq {\psi^\q}$.
	By the definition of $\R$ and by minimality,
	$d \goesto {\testop(\varphi)} d'$
	with $d' = \tuple{\tuple {\family {\ell'} \p \P, \lambda}, n, r}$
	where $\ell'^\r = \ell^\r$ for every $\r \in \P \setminus \set {\p, \q}$.
	We take $c' = \tuple {\family {\ell'} \p \P, \mu}$
	and we need to argue that in $\rvsem \S$
	we can take the rendezvous transition $c \goesto {(\op^\p, \op^\q)} c'$.
	Let $\delta = \mu(\x^\q_0) - \mu^(\x^\p_0) \geq 0$ be the desynchronisation between sender and receiver.
	Following the definition of desynchronised semantics,
	we need to show that there exists a valuation for clock channels ${\mu^\pq \in \Qpos^{\X^\pq}}$ \st
	${(\mu, \mu^\pq) \models \psi^\p}$ and
	${(\mu, \mu^\pq + \delta) \models \psi^\q}$.
	We proceed by a case analysis on the condition $\varphi$.
	\begin{enumerate}
		
		\item[\bf (5a)] In the first case, $\varphi \equiv \n^\pq \sim k$ is an inequality counter constraint,
		and thus $n(\n^\pq) \sim k$ holds.
		Then, $\psi^\p \equiv \x^\pq = 0$ and $\psi^\q \equiv \x^\pq \sim k$ with $\x^\pq : \N$ an integral clock.
		Take $\mu^\pq(\x^\pq) = 0$. Clearly ${(\mu, \mu^\pq) \models \psi^\p}$ is satisfied.
		By~\eqref{eq:desynch:integral}, $n(\n^\pq) = \floor {\mu(\x^\q_0) - \mu(\x^\p_0)} = \floor \delta$
		and thus $\floor \delta = \floor {\mu^\pq(\x^\pq) + \delta} \sim k$ holds.
		Since $\x^\pq : \N$ is an integral clock,
		the latter is equivalent to $\mu^\pq(\x^\pq) + \delta \models \x^\pq \sim k$,
		thus showing ${(\mu, \mu^\pq + \delta) \models \x^\pq \sim k}$, as required.
		
		\item[\bf (5b)] In the second case, $\varphi \equiv \n^\pq \eqv M k$ is a modular counter constraint,
		and we reason as above.
		
		\item[\bf (5c)] In the last case, $\varphi \equiv K_0(\hat \x^\q, \hat\x^\p_0, \hat\x^\q_0)$ is a register constraint;
		thus $K_0(r(\hat \x^\q), r(\hat\x^\p_0), r(\hat\x^\q_0))$ holds.
		Then, ${\psi^\p \equiv \x^\pq = 0}$ and	${\psi^\q \equiv \x^\pq \leq \x^\q}$
		with $\x^\pq, \x^\q : \I$ two fractional clocks.
		Take $\mu^\pq(\x^\pq) = 0$. Clearly ${(\mu, \mu^\pq) \models \psi^\p}$ is satisfied.
		By the definition of $K_0$,
		$r(\hat \x^\q_0) \ominus r(\hat \x^\p_0) \leq r(\hat \x^\q_0) \ominus r(\hat \x^\q)$ (\cf\eqref{eq:fractional:send-receive}).
		By~\eqref{eq:desynch:fractional}, $r(\hat \x^\q_0) \ominus r(\hat \x^\p_0) = \fract \delta$,
		and by~\eqref{eq:registers}, $r(\hat \x^\q_0) \ominus r(\hat \x^\q) = \fract{\mu(\x^\q)}$.
		Thus, $\fract \delta \leq \fract{\mu(\x^\q)}$,
		that is $(\mu, \mu^\pq + \delta) \models \x^\pq \leq \x^\q$,
		as required.\qedhere
		
	\end{enumerate}
		
\ignore{	

\begin{equation}
	\label{eq:registers}
	r(\hat \x^\p_0) \ominus r(\hat \x^\p) = \fract{\mu(\x^\p)}.
\end{equation}
\item The fractional part of reference clocks $\x^\p_0$'s is stored in the reference registers $\hat \x^\p_0$,
${\fract{\mu(\x^\p_0)} = r(\hat \x^\p_0)}$,
and consequently we can recover also the fractional desynchronisation between $\p$ and $\q$
according to \eqref{eq:desynch} as:
\begin{equation}
	\label{eq:desynch:fractional}
	r(\hat \x^\q_0) \ominus r(\hat \x^\p_0) = \fract {\mu(\x^\q_0) - \mu(\x^\p_0)}.
\end{equation}

			\begin{enumerate}

				\item The last case is a fractional send-receive pair.
				Since our \stca is simple,
				By~\eqref{eq:fractional:send-receive},
				we take .
			\end{enumerate}
}
\end{proof}

\subsection{Missing proofs for Sec.~\ref{sec:result}}
\label{app:main}

\begin{proof}[Proof of the ``if'' direction of Theorem~\ref{thm:main}]
	Let $\S$ be a \stca over a polyforest topology,
	where in each polytree there is at most one channel with integral inequality tests.
	By Lemma~\ref{lem:desync} the standard semantics $\sem \S$
	is equivalent to the desynchronised one $\desem \S$,
	which in turn is equivalent to the rendezvous one $\rvsem \S$
	by Lemma~\ref{lem:rendezvous}.
	By the transformations of Sec.~\ref{sec:simple}
	we an assume that the \stca is simple.
	This allows us to apply the construction of this section
	in order to build a \rac $\sem \R$
	\st the rendezvous semantics $\rvsem \S$ is equivalent to $\sem \R$ by Lemma~\ref{lem:translation}.
	Suppose the topology $\T$ decomposes into $n$ disjoint polytrees $\T_1, \dots, \T_n$,
	where by assumption in each of the $\T_i$'s there is at most one channel with integral inequality tests.
	We obtain a \rac $\R$ with counters of which $n$ have threshold tests,
	and thus unless $n = 1$ we cannot apply immediately Theorem~\ref{thm:rac:decidable} to obtain decidability of the non-emptiness problem.
	With a small modification of the construction of $\R$
	instead of simulating all polytrees $\T_1, \dots, \T_n$ in parallel,
	we can simulate them sequentially by running $\T_1$ first,
	followed by $\T_2$, \dots, till $\T_n$ (\cf \cite[Theorem 3]{ClementeHerbreteauStainerSutre:FOSSACS13}).
	In order for the sequential simulation to be faithful,
	we need to ensure that the same total amount of time elapses when simulating any of the $\T_i$'s.
	For the integral part of the elapsed time, we can add an extra counter $\n_{\T_i}$ for each component
	which is increased by one every time some fixed process $\p$ therein elapses 1 time unit;
	at the end of all simulations, we additionally check that $\n_{\T_1} = \cdots = \n_{\T_n}$
	by decreasing all such counters by $1$ until they all hit $0$.
	(Notice that at the end of the simulation of $\T_i$ all processes therein elapse the same amount of time
	since we require all counters $\n^\pq$ to be $0$ at the end of the run.)
	For the fractional part of the elapsed time no additional check is needed,
	since reference registers $\hat \x^\p_0 = 0$ at the end of the run by construction.
	In this way it suffices to have only one counter with threshold tests
	which is reused in the subsequent simulations,
	and we obtain decidability by Theorem~\ref{thm:rac:decidable}.
\end{proof}

\end{document}